\definecolor{fxtarget}{rgb}{0.0000,0.0000,0.4823}
\newtheorem{theorem}{Theorem}
\theoremstyle{definition}
\newtheorem{problem}{Problem}
\renewenvironment{proof}{\begin{trivlist} \item{\textit{Proof.}}}{\end{trivlist}}
\newcommand{\Subseq}{\mathsf{Subseq}}
\newcommand{\lf}{\mathit{L}}
\renewcommand{\P}{\mathsf{P}}
\newcommand{\LMP}{\mathsf{LMP}}
\newcommand{\MP}{\mathsf{MP}}
\newcommand{\ind}{\mathsf{in\_deg}}
\begin{document}

\title{Computing SEQ-IC-LCS of Labeled Graphs}

\author[1]{Yuki~Yonemoto}
\author[2]{Yuto~Nakashima}
\author[2]{Shunsuke~Inenaga}

\affil[1]{Department of Information Science and Technology, Kyushu University}

\affil[2]{Department of Informatics, Kyushu University}

\date{}
\maketitle

\begin{abstract}
We consider labeled directed graphs where each vertex is labeled with a non-empty string. Such labeled graphs are also known as non-linear texts in the literature.
In this paper, we introduce a new problem of comparing two given
labeled graphs, called the SEQ-IC-LCS problem on labeled graphs.
The goal of SEQ-IC-LCS is to compute the the length of the longest common subsequence (LCS) $Z$ of two target labeled graphs $G_1 = (V_1, E_1)$ and $G_2 = (V_2, E_2)$ that includes some string in the constraint labeled graph $G_3 = (V_3, E_3)$ as its subsequence.
Firstly, we consider the case where $G_1$, $G_2$ and $G_3$ are all acyclic,
and present algorithms for computing their SEQ-IC-LCS
in $O(|E_1||E_2||E_3|)$ time and $O(|V_1||V_2||V_3|)$ space.
Secondly, we consider the case where $G_1$ and $G_2$ can be cyclic and $G_3$ is acyclic, and present algorithms for computing their SEQ-IC-LCS 
in $O(|E_1||E_2||E_3| + |V_1||V_2||V_3|\log|\Sigma|)$ time and $O(|V_1||V_2||V_3|)$ space, where $\Sigma$ is the alphabet.

\end{abstract}

\section{Introduction}\label{chap:intro}

We consider \emph{labeled (directed) graphs} where each vertex is labeled with a non-empty string.
Such labeled graphs are also known as \emph{non-linear texts} or \emph{hypertexts} in the literature.
Labeled graphs are a natural generalization of usual (unary-path) strings,
which can also be regarded as a compact representation of a set of strings.
After introduced by the Database community~\cite{hypertext},
labeled graphs were then considered by the string matching community~\cite{Manber,park,amir,navarro,EquiGMT19,EquiMT21,Caceres23}.
Recently, graph representations of large-scale string sets appear in the real-world applications including graph databases~\cite{AnglesABHRV17} and pan-genomics~\cite{bbw089}.
For instance, \emph{elastic degenerate strings}~\cite{GrossiILPPRRVV17,AoyamaNIIBT18,BernardiniPPR20,IliopoulosKP21,BernardiniGPPR22}, which recently gain attention with bioinformatics background, can be regarded as a special case of labeled graphs.
In the best case, a single labeled graph can represent exponentially many strings.
Thus, efficient string algorithms that directly work on labeled graphs without
expansion are of significance both in theory and in practice.

Shimohira et al.~\cite{PSC2011-17} introduced the problem of computing
the \emph{longest common subsequence} (\emph{LCS}) of two given labeled graphs, which, to our knowledge, the first and the only known similarity measure of labeled graphs.
Since we can easily convert any labeled graph with string labels
to an equivalent labeled graph with single character labels (see Figure~\ref{fig:nlts}),
in what follows, we evaluate the size of a labeled graph by the number of vertices and edges in the (converted) graph.
Given two labeled graphs $G_1 = (V_1, E_1)$ and $G_2 = (V_2, E_2)$,
Shimohira et al.~\cite{PSC2011-17} showed how to solve the LCS problem on labeled graphs
in $O(|E_1||E_2|)$ time and $O(|V_1||V_2|)$ space when both $G_1$ and $G_2$ are acyclic,
and in $O(|E_1||E_2| + |V_1||V_2|\log|\Sigma|)$ time and $O(|V_1||V_2|)$ space
when $G_1$ and $G_2$ can be cyclic, where $\Sigma$ is the alphabet.
It is noteworthy that their solution is almost optimal
since the quadratic $O((|A||B|)^{1-\epsilon})$-time conditional lower bound~\cite{AbboudBW15,BringmannK15} with any constant $\epsilon > 0$ for the LCS problem on two strings $A,B$ also applies to the LCS problem on labeled graphs.

The \emph{constrained LCS problems} on strings, which were first proposed by Tsai~\cite{CLCS_Tsai_2003} and then extensively studied in the literature~\cite{CLCS_Tsai_2003,SEQICLCS_2004,ArslanE05,SEQECLCS_Chen_2011,STRICLCS_DEOROWICZ_2012,STRECLCS_Wang_2013,YonemotoNIB23}, use a third input string $P$ which introduces a-priori knowledge of the user to the solution string $Z$ to output. 
The task here is to compute the longest common subsequence $Z$ of two target strings $A$ and $B$ that meets the condition w.r.t. $P$, such that
\begin{description}
  \item[STR-IC-LCS:] $Z$ includes (contains) $P$ as substring;
  \item[STR-EC-LCS:] $Z$ excludes (does not contain) $P$ as substring;
  \item[SEQ-IC-LCS:] $Z$ includes (contains) $P$ as subsequence;
  \item[SEQ-EC-LCS:] $Z$ excludes (does not contain) $P$ as subsequence.
\end{description}
While STR-IC-LCS can be solved in $O(|A||B|)$ time~\cite{STRICLCS_DEOROWICZ_2012},
the state-of-the-art solutions to STR-EC-LCS and SEQ-IC/EC-LCS
run in $O(|A||B||P|)$ time~\cite{SEQICLCS_2004,ArslanE05,SEQECLCS_Chen_2011,STRECLCS_Wang_2013}.

In this paper, we consider the SEQ-IC-LCS problems on labeled graphs,
where the inputs are two target labeled graphs
$G_1 = (V_1, E_1)$ and $G_2 = (V_2, E_2)$, and a constraint text $G_3 = (V_3, E_3)$,
and the output is (the length of) a longest common subsequence of
$G_1$ and $G_2$ such that $Z$ includes as subsequence some string that is represented by $G_3$.
Firstly, we consider the case where $G_1$, $G_2$ and $G_3$ are all acyclic,
and present algorithms for computing their SEQ-IC-LCS 
in $O(|E_1||E_2||E_3|)$ time and $O(|V_1||V_2||V_3|)$ space.
Secondly, we consider the case where $G_1$ and $G_2$ can be cyclic and $G_3$ is acyclic, and present algorithms for computing their SEQ-IC-LCS 
in $O(|E_1||E_2||E_3| + |V_1||V_2||V_3|\log|\Sigma|)$ time and $O(|V_1||V_2||V_3|)$ space, where $\Sigma$ is the alphabet.
The time complexities of our algorithms and related work are summarized in Table~\ref{tab:relatedwork}.
Our algorithms for solving SEQ-IC-LCS on labeled graphs
are based on the solutions to SEQ-IC-LCS of usual strings proposed by
Chin et al.~\cite{SEQICLCS_2004}.
We emphasize that a faster $o(|E_1||E_2||E_3|)$-time solution to the SEQ-IC-LCS problems implies a major improvement over the SEQ-IC-LCS problems for strings whose best known solutions require cubic time.

A related work is 
the \emph{regular language constrained sequence alignment} (\emph{RLCSA})
problem~\cite{ARSLAN} for two input strings $A$ and $B$ in which the constraint is given as an NFA.
It is known that this problem can be solved in $O(|A||B||V|^3/\log|V|)$ time~\cite{RLCSA}, where $|V|$ denotes the number of states in the NFA.

\begin{table}[tbh]
  \caption{Time complexities of algorithms for labeled graph/usual string comparisons, for inputs text-1 $G_1 = (V_1, E_1)$, text-2 $G_2 = (V_2, E_2)$, and text-3 $G_3 = (V_3, E_3)$. Here, a string input of length $n$ is regarded as a unary path graph $G = (V, E)$ with $|E| = n$.}
  \label{tab:relatedwork}
  \centerline{
  \begin{tabular}{|wc{22mm}|wc{10mm}|wc{10mm}|wc{10mm}|l|}\hline
    problem & text-1 & text-2 & text-3 & time complexity\\\hline\hline
    \multirow{3}{*}{LCS}
    & string & string & - & $O(|E_1||E_2|)$ \ \cite{Wagner_1974_LCS}\\ \cline{2-5}
    & DAG & DAG & - & $O(|E_1||E_2|)$ \ \cite{PSC2011-17}\\ \cline{2-5}
    & graph & graph & - & $O(|E_1||E_2|+ |V_1||V_2|\log|\Sigma|)$ \ \cite{PSC2011-17}\\\hline\hline
    \multirow{3}{*}{SEQ-IC-LCS}
    & string & string & string & $O(|E_1||E_2||E_3|)$ \ \cite{SEQICLCS_2004,ArslanE05}\\ \cline{2-5}
    & DAG & DAG & DAG & $O(|E_1||E_2||E_3|)$ \ [this work]\\\cline{2-5}
    & graph & graph & DAG & $O(|E_1||E_2||E_3| + |V_1||V_2||V_3|\log|\Sigma|)$ \ [this work]\\\hline\hline
    SEQ-EC-LCS & string & string & string & $O(|E_1||E_2||E_3|)$ \ \cite{SEQECLCS_Chen_2011}\\ \hline \hline
    STR-IC-LCS & string & string & - & $O(|E_1||E_2|)$ \ \cite{STRICLCS_DEOROWICZ_2012}\\ \hline \hline
    STR-EC-LCS & string & string & - & $O(|E_1||E_2|)$ \ \cite{STRECLCS_Wang_2013}\\ \hline \hline
    \multirow{1}{*}{RLCSA} 
    & string & string & NFA & $O(|E_1||E_2| |V_3|^3/\log |V_3|)$ \ \cite{RLCSA}\\\hline
  \end{tabular}
  }
\end{table}

\section{Preliminaries}

\subsection{Strings and Graphs}

Let $\Sigma$ be an alphabet.
An element of $\Sigma^*$ is called a \emph{string}.
The \emph{length} of a string $w$ is denoted by $|w|$. 
The \emph{empty string}, denoted by $\varepsilon$, 
is a string of length $0$.
Let $\Sigma^+ = \Sigma^\ast \setminus \{\varepsilon\}$.
For a string $w = xyz$ with $x,y,z \in \Sigma^*$,
strings $x$, $y$, and $z$ are called a \emph{prefix}, \emph{substring}, 
and \emph{suffix} of string $w$, respectively.
The $i$th character of a string $w$ is denoted by $w[i]$ for $1\le i\le|w|$, and
the substring of $w$ that begins at position $i$ and ends at position $j$
is denoted by $w[i..j]$ for $1\le i\le j\le |w|$.
For convenience, let $w[i..j]=\varepsilon$ for $i>j$.
A string $u$ is a \emph{subsequence} of another string $w$
if $u = \varepsilon$ or there exists a sequence of integers $i_1,\ldots,i_{|u|}$ 
such that $1\le i_1<\cdots<i_{|u|}\le|w|$ and $u=w[i_1]\cdots w[i_{|u|}]$.

A \emph{directed graph} $G$ is an ordered pair $(V,E)$ of the set $V$ of \emph{vertices}
and the set $E \subseteq V \times V$ of \emph{edges}.
The \emph{in-degree} of a vertex $v$ is denoted by $\ind(v) = |\{u \mid (u, v) \in E \}|$.
A \emph{path} in a (directed) graph $G=(V,E)$ is
a sequence $v_0,\ldots,v_k$ of vertices such that $(v_{i-1},v_i)\in E$ 
for every $i=1,\ldots,k$.
A path $\pi = v_0, \ldots, v_k$ in graph $G$ is said to be \emph{left-maximal}
if its left-end vertex $v_0$ has no in-coming edges,
and $\pi$ is said to be \emph{right-maximal} if its right-end vertex $v_k$ has no out-going edges.
A path $\pi$ is said to be \emph{maximal} if $\pi$ is both left-maximal and right-maximal.
For any vertex $v \in V$, let $\P(v)$ denote the set of all paths 
ending at vertex $v$, and $\LMP(v)$ denote the set of left-maximal paths ending at $v$.
The set of all paths in $G = (V, E)$ is denoted by $\P(G) = \{\P(v) \mid v \in V\}$.
Let $\MP(G)$ denote the set of maximal paths in $G$.


\subsection{Longest Common Subsequence (LCS) of Strings}\label{ssec:Lsequence}
The \emph{longest common subsequence} (LCS) problem for two given strings $A$ and $B$ is to compute (the length of) the longest string $Z$ that is a subsequences of both $A$ and $B$.
It is well-known that LCS can be solved in $O(|A||B|)$ time 
by using the following recurrence~\cite{Wagner_1974_LCS}:
\begin{equation*}
\label{eq:Lsequence}
C_{i,j}= 
\begin{cases}
0                     & \mbox{if $i=0$ or $j=0$}; \\
1+C_{i-1,j-1}            & \mbox{if $i,j>0$ and $x[i]=y[j]$}; \\
\max(C_{i-1,j},C_{i,j-1}) & \mbox{if $i,j>0$ and $x[i]\not=y[j]$},\\
\end{cases}
\end{equation*}
where $C_{i,j}$ is the LCS length of $A[1..i]$ and $B[1..j]$.

\subsection{SEQ-IC-LCS of Strings}\label{ssec:SEQ-I-L}
Let $A$, $B$, and $P$ be strings. A string $Z$ is said to be an \emph{SEQ-IC-LCS} 
of two target strings $A$ and $B$ \emph{including} the pattern $P$ if $Z$ is a longest string 
such that $P$ is a subsequence of $Z$ and that $Z$ is a common subsequence of $A$ and $B$.
Chin et al.~\cite{SEQICLCS_2004} solved this problem in $O(|A||B||P|)$ time by using the following recurrence:
\begin{equation}
   C_{i,j,k}= 
   \begin{cases}
      0                              & \mbox{if $k=0$ and ($i=0$ or $j=0$)}; \\
      -\infty                        & \mbox{if $k \neq 0$ and ($i=0$ or $j=0$)}; \\
      C_{i-1,j-1,k-1}+1              & \mbox{if $i,j,k > 0$ and $A[i] = B[j] = P[k]$}; \\
      C_{i-1,j-1,k}+1                & \mbox{if $i,j > 0$ and $A[i] = B[j] \neq P[k]$}; \\
      \max(C_{i-1,j,k}, C_{i,j-1,k}) & \mbox{if $i,j > 0$ and $A[i] \neq B[j]$}, \\
   \end{cases}
   \label{eq:SEQ-I-L}
\end{equation}
where $C_{i,j,k}$ is the SEQ-IC-LCS length of $A[1..i]$, $B[1..j]$, and $P[1..k]$.


\subsection{Labeled Graphs}
A \emph{labeled graph} is a directed graph with vertices labeled by strings,
namely, it is a directed graph $G = (V, E, \lf)$ where
$V$ is the set of vertices, $E$ is the set of edges,
and $\lf : V \rightarrow \Sigma^+$ is a labeling function
that maps nodes $v \in V$ to non-empty strings $\lf(v)\in\Sigma^+$.
For a path $\pi =v_0,\ldots,v_k \in \P(G)$, 
let $\lf(\pi)$ denote the string spelled out by $w$,
namely $\lf(\pi)=\lf(v_0)\cdots \lf(v_k)$.
The size $|G|$ of a labeled graph $G = (V, E, \lf)$ is $|V|+|E|+\sum_{v \in V}|\lf(v)|$.
Let 
$\Subseq(G) = \{\Subseq(\lf(\pi)) \mid \pi \in \P(G)\}$ denote the set of subsequences of a labeled graph $G = (V, E, \lf)$.
For a set $P \in \P(G)$ of paths in $G$, 
let $\lf(P) = \{\lf(\pi) \mid \pi \in P\}$ denote the set of
string labels for the paths in $P$.

For a labeled graph $G = (V,E, \lf)$,
consider an ``atomic'' labeled graph $G^\prime = (V^\prime, E^\prime, \lf^\prime)$ 
such that $\lf^\prime : V^\prime \rightarrow \Sigma$,
\begin{eqnarray*}
V^\prime & = & \{v_{i,j} \mid \lf^\prime(v_{i,j}) = \lf(v_i)[j], v_i \in V, 1 \leq j \leq |\lf(v_i)|\}, \mbox{ and}\\
E^\prime & = & \{(v_{i,|\lf(v_i)|}, v_{k,1}) \mid (v_i, v_k) \in E\} \cup \{(v_{i,j}, v_{i, j+1}) \mid v_i \in V, 1 \leq j < |\lf(v_i)|\},
\end{eqnarray*}
that is, $G^\prime$ is a labeled graph
with each vertex being labeled by a single character,
which represents the same set of strings as $G$.
An example is shown in Figure~\ref{fig:nlts}.
\begin{figure}[t]
 \centering{
    \includegraphics[scale=0.4]{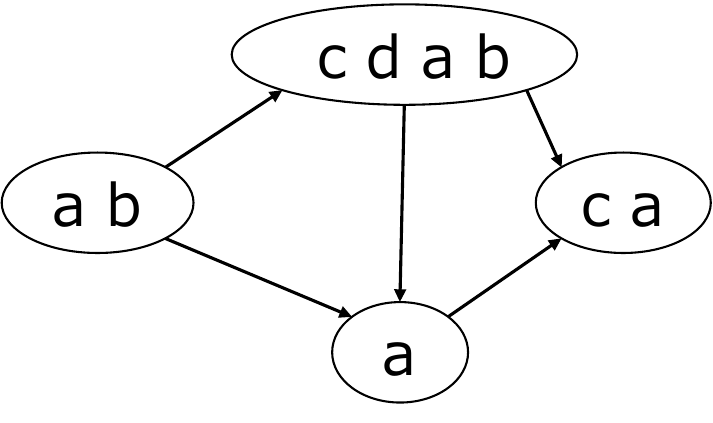}
    \hfil
    \includegraphics[scale=0.4]{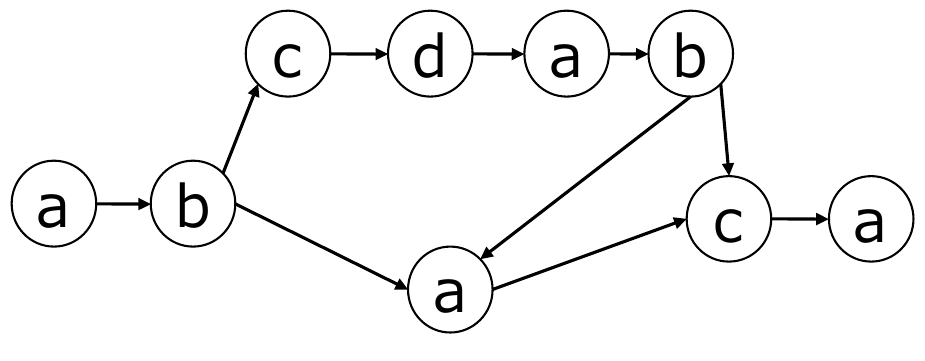}
    \caption{A labeled graph $G = (V, E, \lf)$ with $\lf:V \rightarrow \Sigma^+$ and its corresponding atomic labeled graph $G^\prime = (V^\prime, E^\prime, \lf^\prime)$ with $\lf^\prime: V^\prime \rightarrow \Sigma$.}
    \label{fig:nlts}
 }
\end{figure}
Since $|V^\prime| = \sum_{v \in V}|\lf(v)|$, 
$|E^\prime| = |E| + \sum_{v \in V}(|\lf(v)|-1)$, 
and $\sum_{v^\prime \in V^\prime}|\lf(v^\prime)| = \sum_{v \in V}|\lf(v)|$,
we have $|G^\prime| = O(|G|)$.
We remark that given $G$, we can easily construct $G^\prime$ in $O(|G|)$ time.
Observe that $\Subseq(G) = \Subseq(G^\prime)$ also holds.


In the sequel we only consider atomic labeled graphs 
where each vertex is labeled with a single character.

\subsection{LCS of Acyclic Labeled Graphs}\label{ssc:nl-LCS}
The problem of computing the length of longest common subsequence of two input acyclic labeled graphs is formalized by Shimohira et al.~\cite{PSC2011-17} as follows.

\begin{problem}[Longest common subsequence problem for acyclic labeled graphs]\label{prob:subseq-nl}
\hfill
\begin{description}
\item[Input:] Labeled graphs $G_1=(V_1,E_1,\lf_1)$ and $G_2=(V_2,E_2,\lf_2)$.
\item[Output:] The length of a longest string in $\Subseq(G_1) \cap \Subseq(G_2)$.
\end{description}
\end{problem}

This problem can be solved in $O(|E_1||E_2|)$ time and $O(|V_1||V_2|)$ space by sorting $G_1$ and $G_2$ topologically and using the following recurrence:
\begin{eqnarray} 
   \lefteqn{C^{\prime}_{i,j} =} \nonumber\\ 
      & & \begin{cases}
       1 \! + \! \max(\{ C^{\prime}_{k, \ell} \mid {(v_{1,k},v_{1,i}) \! \in \! E_1, (v_{2,\ell},v_{2,j}) \! \in \! E_2} \} \cup \{0\}) & \mbox{if $\lf_1(v_{1,i}) \! = \! \lf_2(v_{2,j})$};\\
       \max\!\left( 
       \begin{array}{l}
         \{ C^{\prime}_{k,j}  \mid {(v_{1,k},v_{1,i})\! \in \! E_1}\} \cup{} \\
         \{ C^{\prime}_{i,\ell} \mid {(v_{2,\ell},v_{2,j})\! \in \! E_2}\} \cup \{0\}
       \end{array} \right) 
       & \mbox{otherwise},
      \end{cases} \label{eq:subseq-nl}
\end{eqnarray}
where $v_{1,i}$ and $v_{2,j}$ are respectively the $i$th and $j$th vertices of $G_1$ and in $G_2$ in topological order, 
for $1 \leq i \leq |V_1|$ and $1 \leq j \leq |V_2|$, 
and $C^{\prime}_{i,j}$ is the length of a longest string in $\Subseq(\lf_1(\P(v_{1,i}))) \cap \Subseq(\lf_2(\P(v_{2,j})))$. 

\subsection{LCS of Cyclic Labeled Graphs}\label{ssc:nl-LCS-ac}
Here we consider a generalized version of Problem~\ref{prob:subseq-nl} where the input labeled graphs $G_1$ and/or $G_2$ can be cyclic.
In this problem, the output is $\infty$ if there is a string $s \in \Subseq(G_1) \cap \Subseq(G_2)$ such that $|s| = \infty$,
and that is the length of a longest string in $\Subseq(G_1) \cap \Subseq(G_2)$.
Shimohira et al.~\cite{PSC2011-17} proposed an $O(|E_1||E_2| +  |V_1||V_2|\log{|\Sigma|})$ time and $O(|V_1||V_2|)$ space algorithm solving this problem. 
Their algorithm judges whether the output is $\infty$ by using a balanced tree, and computes the length of the solution by using Equation~(\ref{eq:subseq-nl}) and the balanced tree if the output is not $\infty$.

\section{The SEQ-IC-LCS Problem for Labeled Graphs}\label{chap:SEQ-I-L-nl}

In this paper, we tackle the problem of computing the 
SEQ-IC-LCS length of three labeled graphs, which formalized as follows:

\begin{problem}[SEQ-IC-LCS problem for labeled graphs]\label{prob:SEQ-I-L-nl}
  \hfill
  \begin{description}
  \item[Input:] Labeled graphs $G_1=(V_1,E_1,\lf_1)$, $G_2=(V_2,E_2,\lf_2)$, and $G_3=(V_3,E_3,\lf_3)$.
  \item[Output:] The length of a longest string in the set \\ $\{z \mid \exists~q \in \lf_3(\MP(G_3)) \mbox{ such that } q \in \Subseq(z) \mbox{ and } z \in \Subseq(G_1) \cap \Subseq(G_2)\}$.
  \end{description}
\end{problem}

Intuitively, Problem~\ref{prob:SEQ-I-L-nl} asks to compute
a longest string $z$ such that $z$ is a subsequence occurring in both $G_1$ and $G_2$
and that there exists a string $q$ which corresponds to a maximal path of $G_3$
and is a subsequence of $z$.

For a concrete example, see the labeled graphs $G_1$, $G_2$ and $G_3$ of Figure~\ref{fig:SEQ-I-L-ac}.
String $\mathtt{cdba}$ is a common subsequence of $G_1$ and $G_2$
and that contains an element $\mathtt{ba}$ of a maximal path string in $\lf_3(\MP(G_3))$.
Since $\mathtt{cdba}$ is such a longest string, we ouput the SEQ-IC-LCS length $|\mathtt{cdba}| = 4$ as the solution to this instance.

In the sequel,
Section~\ref{sec:SEQ-I-L-ac} presents our solution to the case where the all input labeled graphs are acyclic,
and Section~\ref{sec:SEQ-I-L-c} presents our solutions case where $G_1$ and/or $G_2$ can be cyclic and $G_3$ is acyclic.

\section{Computing SEQ-IC-LCS of Acyclic Labeled Graphs}\label{sec:SEQ-I-L-ac}

In this section, we present our algorithm which solves Problem~\ref{prob:SEQ-I-L-nl} in the case where all of $G_1$, $G_2$ and $G_3$ are acyclic.
The following is our result:

\begin{theorem}\label{theo:SEQ-I-L-nl}
  Problem~\ref{prob:SEQ-I-L-nl} where input labeled graphs $G_1$, $G_2$ and $G_3$ are all acyclic 
  is solvable in $O(|E_1||E_2||E_3|)$ time and $O(|V_1||V_2||V_3|)$ space.
\end{theorem}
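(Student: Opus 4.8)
The plan is to fuse the two recurrences already available in the excerpt: the three–dimensional string recurrence of Equation~(\ref{eq:SEQ-I-L}) and the acyclic–graph \textsc{LCS} recurrence of Equation~(\ref{eq:subseq-nl}). First I would topologically sort $G_1$, $G_2$, $G_3$ (possible since all are acyclic) and index a table $C_{i,j,k}$ by triples of vertices $v_{1,i},v_{2,j},v_{3,k}$ in these orders. For $k=0$ I set $C_{i,j,0}=C'_{i,j}$ exactly as in Equation~(\ref{eq:subseq-nl}) (the unconstrained case), and for $k\ge 1$ I define $C_{i,j,k}$ to be the length of a longest string $z\in\Subseq(\lf_1(\P(v_{1,i})))\cap\Subseq(\lf_2(\P(v_{2,j})))$ such that the label of some path of $G_3$ running from a source to $v_{3,k}$ is a subsequence of $z$, and $-\infty$ when no such $z$ exists. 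The idea is that $k$ plays the role that the prefix length $P[1..k]$ plays in Equation~(\ref{eq:SEQ-I-L}), but now ``advancing in the pattern'' means moving one step forward along a path of $G_3$.

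The recurrence is obtained by replacing every ``index $-1$'' move of Equation~(\ref{eq:SEQ-I-L}) with a ``move to a predecessor'' as in Equation~(\ref{eq:subseq-nl}). When $\lf_1(v_{1,i})\neq\lf_2(v_{2,j})$, I recurse into the predecessors of $v_{1,i}$ or of $v_{2,j}$ while keeping $k$ fixed (the mismatch branch of Equation~(\ref{eq:subseq-nl}), with $\{0\}$ replaced by $\{-\infty\}$ when $k\ge 1$). When $\lf_1(v_{1,i})=\lf_2(v_{2,j})\neq\lf_3(v_{3,k})$, I append the matched character to $z$ without progressing in $G_3$, taking $1+\max$ over predecessor pairs in $E_1\times E_2$ with $k$ fixed. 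When the three labels coincide, I progress one step in $G_3$:
\[
C_{i,j,k} \;=\; 1 + \max\Big( \{ C_{k',\ell,m} \mid (v_{1,k'},v_{1,i})\in E_1,\ (v_{2,\ell},v_{2,j})\in E_2,\ (v_{3,m},v_{3,k})\in E_3 \} \;\cup\; S_{i,j,k} \;\cup\; \{-\infty\} \Big),
\]
where $S_{i,j,k}=\{ C_{k',\ell,0} \mid (v_{1,k'},v_{1,i})\in E_1,\ (v_{2,\ell},v_{2,j})\in E_2 \}$ if $\ind(v_{3,k})=0$ and $S_{i,j,k}=\emptyset$ otherwise. The maximality of the constraint path is encoded at its two ends: \emph{left}-maximality by only opening a $G_3$-path at a source (the set $S_{i,j,k}$, drawing from the layer $k=0$), and \emph{right}-maximality by reading the answer off as $\max C_{i,j,k}$ over all $i,j$ and all sinks $v_{3,k}$ (out-degree $0$), since a source-to-sink path is exactly a maximal path of $G_3$, i.e. an element of $\MP(G_3)$.

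Correctness would be proved by induction on the topological order of the triple $(i,j,k)$. The delicate point, exactly as in the string case, is that when all three labels agree it is sound to keep only the \emph{progressing} transition and discard the option of appending the matched character without advancing in $G_3$. I expect this to be the main obstacle. It rests on a monotonicity lemma asserting that relaxing the $G_3$-constraint from ``source-to-$v_{3,k}$'' to ``source-to-$m$'' for a predecessor $m$ of $v_{3,k}$ can only increase the optimum, i.e. $C_{k',\ell,m}\ge C_{k',\ell,k}$; this is the labeled-graph analogue of the inequality $C_{i-1,j-1,k-1}\ge C_{i-1,j-1,k}$ that makes Equation~(\ref{eq:SEQ-I-L}) correct. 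Granting this lemma, the usual exchange arguments for \textsc{LCS} and \textsc{SEQ-IC-LCS} carry over; the only genuinely new bookkeeping is the source/sink handling that forces the matched $G_3$-path to be maximal, and the $-\infty$ propagation that marks infeasibility.

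For the resource bounds, the table has $O(|V_1||V_2||V_3|)$ entries (the extra layer $k=0$ contributes only $O(|V_1||V_2|)$ and is dominated), giving the claimed space. For time, the progressing case costs $O(\ind(v_{1,i})\,\ind(v_{2,j})\,\ind(v_{3,k}))$ per cell, and by the standard edge-counting identity $\sum_i \ind(v_{1,i})=|E_1|$ (and likewise for $G_2,G_3$) these costs sum to $(\sum_i \ind(v_{1,i}))(\sum_j \ind(v_{2,j}))(\sum_k \ind(v_{3,k}))=|E_1||E_2||E_3|$. The match-stay and mismatch cases cost only $O(\ind(v_{1,i})\,\ind(v_{2,j}))$ and $O(\ind(v_{1,i})+\ind(v_{2,j}))$ per cell and are therefore lower-order, so the character-match case is the bottleneck and the overall running time is $O(|E_1||E_2||E_3|)$, matching the statement.
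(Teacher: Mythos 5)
Your construction is essentially the paper's: the same table indexed by vertex triples in topological order, the $k=0$ layer equal to the unconstrained graph-LCS table, the same three-way case split on label equality, the fallback to layer $0$ when $v_{3,k}$ is a $G_3$-source to enforce left-maximality, the read-off at $G_3$-sinks for right-maximality, and the same edge-counting time analysis. One concrete defect: your triple-match recurrence loses a base case. If $v_{1,i}$ (or $v_{2,j}$) has no in-coming edges and $v_{3,k}$ has no in-coming edges, then both the predecessor-triple set and your $S_{i,j,k}$ are empty, so your formula returns $1+(-\infty)=-\infty$, whereas the correct value is $1$ (the single matched character is itself a common subsequence containing the length-one constraint path). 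The paper repairs exactly this with an extra term $\gamma$ that equals $0$ precisely when one of $v_{1,i},v_{2,j}$ is a source of its graph and $v_{3,k}$ is a source of $G_3$, and $-\infty$ otherwise; you need the same term.

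Separately, the monotonicity lemma as you state it --- $C_{k',\ell,m}\ge C_{k',\ell,k}$ for a predecessor $m$ of $v_{3,k}$ --- is false per predecessor: a string feasible for layer $k$ is only guaranteed feasible for the particular predecessor through which its witnessing constraint path enters $v_{3,k}$, so a predecessor $m$ reachable in $G_3$ only via long paths can have $C_{k',\ell,m}=-\infty$ while $C_{k',\ell,k}$ is finite. What is true, and all your recurrence needs since it maximizes over every predecessor $m$, is $\max_m C_{k',\ell,m}\ge C_{k',\ell,k}$. With that restatement, plus the exchange step showing the reverse direction (re-embed the witnessing $G_3$-path so its last character lands on the final character of an optimal $s$, so that deleting that character yields a feasible string for some predecessor triple of length $|s|-1$ --- this is the paper's ``such nodes always exist'' claim), your induction goes through and the complexity bounds you give are correct.
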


\begin{proof}
We perform topological sort to the vertices of $G_1$, $G_2$, and $G_3$ in
$O(|E_1|+|E_2|+|E_3|)$ time and $O(|V_1|+|V_2|+|V_3|)$ space.
For $1 \leq i \leq |V_1|$, $1 \leq j \leq |V_2|$, and $1 \leq k \leq |V_3|$,
let $v_{1,i}$, $v_{2,j}$, $v_{3,k}$ denote the $i$th, $j$th, and $k$th
vertices in $G_1$, $G_2$, and $G_3$ in topological order,
respectively.
Let
\begin{equation*}
\mathsf{S}_{\mathrm{IC}}(v_{1,i}, v_{2,j}, v_{3,k}) = 
\left\{
z ~\middle|~
\begin{array}{l}
  \exists q \in \lf_3(\LMP(v_{3,k})) \mbox{ such that } q \in \Subseq(z) \\
  \mbox{and } z \in \Subseq(\lf_1(\P(v_{1,i}))) \cap \Subseq(\lf_2(\P(v_{2,j})))
\end{array}
\right\}
\end{equation*}
be the set of candidates of SEQ-IC-LCS strings for the maximal induced graphs
of $G_1$, $G_2$, and $G_3$ whose sinks are $v_{1,i}$, $v_{2,j}$, and $v_{3,k}$,
respectively.
Let $D_{i,j,k}$ denote the length of a longest string in $\mathsf{S}_{\mathrm{IC}}(v_{1,i}, v_{2,j}, v_{3,k})$.
The solution to Problem~\ref{prob:SEQ-I-L-nl} (the SEQ-IC-LCS length) is the maximum value of $D_{i,j,k}$ for which $v_{3,k}$ does not have out-going edges (i.e. $v_{3,k}$ is the end of a maximal path in $G_3$).

When $k=0$, then the problem is equivalent to Problem~\ref{prob:subseq-nl} of computing SEQ-IC-LCS of strings.
In that follows, we show how to compute $D_{i,j,k}$ for $k > 0$:
\begin{enumerate}
  \item If $\lf_1(v_{1, i}) = \lf_2(v_{2, j}) = \lf_3(v_{3, k})$, there are three cases to consider:
  \begin{enumerate}
    \item If \sinote{modified}{$v_{1, i}$ does not have in-coming edges or $v_{2, j}$ does not have in-coming edges,
          and if $v_{3, k}$ does not have in-coming edges}
          (i.e., $\ind(v_{1, i}) =  \ind(v_{3,k}) = 0$, or $\ind(v_{2, j}) = \ind(v_{3, k}) = 0$),
          then clearly $D_{i, j, k} = 1$.
    \item If \sinote*{modified}{$v_{1, i}$ does not have in-coming edges or $v_{2, j}$ does not have in-coming edges,
          and if $v_{3, k}$ has some in-coming edge(s)}
          (i.e., $\ind(v_{1, i}) = 0$ and $\ind(v_{3, k}) \geq 1$, or $\ind(v_{2, j}) = 0$ and $\ind(v_{3, k}) \geq 1$),
          then clearly $D_{i, j, k} = -\infty$.
    \item If both $v_{1, i}$ and $v_{2, j}$ have some in-coming edge(s)
          and $v_{3, k}$ does not have in-coming edges
          (i.e., $\ind(v_{1,i}) \geq 1$, $\ind(v_{2,j}) \geq 1$,
          and $\ind(v_{3, k}) = 0$),
          then 
          let $v_{1,x}$ and $v_{2,y}$ be any nodes s.t.
          $(v_{1,x}, v_{1, i}) \in E_1$, and $(v_{2, y}, v_{2, j}) \in E_2$, 
          respectively.
          Let $s$ be a longest string in 
          $\Subseq(\lf_1(\P(v_{1,i}))) \cap \Subseq(\lf_2(\P(v_{2,j})))$.
          Assume on the contrary that 
          there exists a string $t \in \Subseq(\lf_1(\P(v_{1,x}))) \cap 
          \Subseq(\lf_2(\P(v_{2,y})))$ such that $|t| > |s| - 1$.
          This contradicts that $s$ is a longest common subsequence of 
          $\lf_1(\P(v_{1,i}))$ and $\lf_2(\P(v_{2,j}))$,
          since $\lf_1(v_{1, i}) = \lf_2(v_{2,j})$.
          Hence $|t| \leq |s| - 1$. 
          If $v_{1, x}$ and $v_{2, y}$ are vertices
          satisfying $C^\prime_{x, y, 0} = |s| - 1$, then $C^\prime_{i, j, k} = C^\prime_{x, y, 0} + 1$.
          Note that such nodes
          $v_{1, x}$ and $v_{2, y}$
          always exist.
    \item Otherwise (all $v_{1,i}$, $v_{2,j}$, and $v_{3,k}$ have some in-coming edge(s)), 
          let $v_{1,x}$, $v_{2,y}$ and $v_{3,z}$ be any nodes s.t.
          $(v_{1,x}, v_{1, i}) \in E_1$, $(v_{2, y}, v_{2, j}) \in E_2$ and $(v_{3, z}, v_{3, k}) \in E_3$, 
          respectively.
          Let $s$ be a longest string in $\mathsf{S}_{\mathrm{IC}}(v_{1,i}, v_{2,j}, v_{3,k})$.
          Assume on the contrary that 
          there exists a string $t \in \mathsf{S}_{\mathrm{IC}}(v_{1,x}, v_{2,y}, v_{3,z})$ such that $|t| > |s|-1$.
          This contradicts that $s$ is a SEQ-IC-LCS of 
          $\lf_1(\P(v_{1,i}))$, $\lf_2(\P(v_{2,j}))$ and $\lf_3(\LMP(v_{3,k}))$,
          since $\lf_1(v_{1, i}) = \lf_2(v_{2,j}) = \lf_3(v_{3,k})$.
          Hence $|t| \leq |s| - 1$. 
          If $v_{1, x}$, $v_{2, y}$ and $v_{3, z}$ are vertices
          satisfying $D_{x, y, z} = |s| - 1$, then
          $D_{i, j, k} = D_{x, y, z} + 1$.
          Note that such nodes $v_{1, x}$, $v_{2, y}$ and $v_{3, z}$ always exist.
  \end{enumerate}

\item If $\lf_1(v_{1, i}) = \lf_2(v_{2, j}) \neq \lf_3(v_{3, k})$, there are two cases to consider:
   \begin{enumerate}
    \item If \sinote*{modified}{$v_{1, i}$ does not have in-coming edges or $v_{2, j}$ does not have-incoming edges}
          (i.e., $\ind(v_{1, i}) = 0$ or $\ind(v_{2, j}) = 0$),
           then clearly $D_{i, j, k}$ does not exist and let $D_{i, j, k} = -\infty$. 
    \item Otherwise (both $v_{1, i}$ and $v_{2, j}$ have in-coming edge(s)),
           let $v_{1,x}$ and $v_{2,y}$ be any nodes s.t.
           $(v_{1,x}, v_{1, i}) \in E_1$ and $(v_{2, y}, v_{2, j}) \in E_2$, 
           respectively.
           Let $s$ be a longest string in $\mathsf{S}_{\mathrm{IC}}(v_{1,i}, v_{2,j}, v_{3,k})$.
           Assume on the contrary that 
           there exists a string $t \in \mathsf{S}_{\mathrm{IC}}(v_{1,x}, v_{2,y}, v_{3,k})$ such that $|t| > |s|-1$.
           This contradicts that $s$ is a SEQ-IC-LCS of 
           $\lf_1(\P(v_{1,i}))$, $\lf_2(\P(v_{2,j}))$ and $\lf_3(\LMP(v_{3,k}))$,
           since $\lf_1(v_{1, i}) = \lf_2(v_{2,j})$.
           Hence $|t| \leq |s| - 1$. 
           If $v_{1, x}$, $v_{2, y}$ and $v_{3, k}$ are vertices
           satisfying $D_{x, y, k} = |s| - 1$, then $D_{i, j, k} = D_{x, y, k} + 1$.
           Note that such nodes $v_{1, x}$, $v_{2, y}$ and $v_{3, k}$ always exist.
    \end{enumerate}
           
\item If $\lf_1(v_{1, i}) \neq \lf_2(v_{2, j})$, there are two cases to consider:
    \begin{enumerate}
      \item If \sinote*{modified}{$v_{1, i}$ does not have in-coming edges and $v_{2, j}$ does not have in-coming edges}
           (i.e., $\ind(v_{1, i}) = \ind(v_{2, j}) = 0$),
           then clearly $D_{i, j, k}$ does not exist and let $D_{i, j, k} = -\infty$.
      \item Otherwise (\sinote*{modified}{$v_{1, i}$ has some in-coming edge(s) or $v_{2, j}$ has some in-coming edge(s)}),
            let $v_{1,x}$ and $v_{2, y}$ be any nodes such that
            $(v_{1,x}, v_{1, i}) \in E_1$ and $(v_{2, y}, v_{2, j}) \in E_2$, 
            respectively.
            Let $s$ be a $\mathsf{S}_{\mathrm{IC}}(v_{1,i}, v_{2,j}, v_{3,k})$.
            Assume on the contrary that there exists a string 
            $t \in \mathsf{S}_{\mathrm{IC}}(v_{1,i}, v_{2,j}, v_{3,k})$
            such that $|t| > |s|$.
            This contradicts that $s$ is a SEQ-IC-LCS of 
            $\lf_1(\P(v_{1,i}))$, $\lf_2(\P(v_{2,j}))$ and $\lf_3(\LMP(v_{3,k}))$, 
            since  $\mathsf{S}_{\mathrm{IC}}(v_{1,x}, v_{2,y}, v_{3,k}) \subseteq  \mathsf{S}_{\mathrm{IC}}(v_{1,i}, v_{2,j}, v_{3,k})$.
            Hence $|t| \leq |s|$.
            If $v_{1, x}$ is a vertex satisfying $D_{x, j, k} = |z|$, then $D_{i,j,k} = D_{x,j,k}$.
            Similarly, if $v_{2,y}$ is a vertex satisfying $D_{i,y,k} = |s|$,
            then $D_{i,j,k} = D_{i,y,k} $.
            Note that such node $v_{1,x}$ or $v_{2,y}$ always exists.
   \end{enumerate}
\end{enumerate}

Consequently we obtain the following recurrence:
\begin{eqnarray} 
\lefteqn{D_{i,j,k} =} \nonumber\\ 
 & & \begin{cases}
  \mbox{Recurrence in Equation}~(\ref{eq:subseq-nl}) & \mbox{if $k=0$};\\ 
  1 + \max \left(
  \left \{D_{x,y,z} ~\middle|~
  \begin{array}{l}
  (v_{1,x},v_{1,i}) \in E_1, \\
  (v_{2,y},v_{2,j}) \in E_2, \\
  (v_{3,z},v_{3,k}) \in E_3, \\
  \mbox{ or } z = 0 
  \end{array} \right\}
  \cup \{\gamma \} \right) 
  & 
  \begin{array}{l}
    \mbox{if } k > 0 \mbox{ and} \\
    \lf_1(v_{1,i}) = \lf_2(v_{2,j}) \\
    = \lf_3(v_{3,k});
  \end{array} \\
  \max\left( 
  \left\{1+D_{x,y,k} ~\middle|~
  \begin{array}{l}
    (v_{1,x},v_{1,i}) \! \in \! E_1, \\
    (v_{2,y},v_{2,j}) \! \in \! E_2
  \end{array}
  \right\} 
  \cup \{-\infty\} \right) 
  &
  \begin{array}{l}
    \mbox{if } k > 0 \mbox{ and} \\
    \lf_1(v_{1,i}) = \lf_2(v_{2,j}) \\
    \neq \lf_3(v_{3,k});
  \end{array} \\
  \max\left( 
  \begin{array}{l}
    \{D_{x,j,k} \mid (v_{1,x},v_{1,i}) \in E_1\} \cup{} \\
    \{D_{i,y,k} \mid {(v_{2,y},v_{2,j}) \in E_2}\} 
    \cup \{-\infty\}
  \end{array}
  \right) 
  & \mbox{otherwise}.
     \end{cases}
\label{eqn:SEQ-I-L-nl}
\end{eqnarray}
where
\begin{eqnarray*}
\gamma & = &
\begin{cases}
  0 &
  \begin{array}{l}
    \mbox{if $v_{1, i}$ does not have in-coming edges at all or $v_{2, j}$ does not have} \\
    \mbox{in-coming edges at all, and $v_{3, k}$ does not have in-coming edges};
  \end{array}  \\
  -\infty & \mbox{otherwise}. 
\end{cases}
\end{eqnarray*}

We compute $D_{i,j,k}$ for all $1 \leq i \leq |V_1|$, $1 \leq j \leq |V_2|$ and $0 \leq k \leq |V_3|$,
using a dynamic programming table of size $O(|V_1||V_2||V_3|)$.

Below we analyze the time complexity for computing $D_{i,j,k}$ with the recurrence:
\begin{itemize}
\item The first case with Equation~(\ref{eq:subseq-nl}) takes $O(|E_1||E_2|)$ time (Section~\ref{ssc:nl-LCS}).

\item Second, let us analyze the time cost for computing 
\[
M_{i,j,k} = \max \{ D_{x,y,z} \mid (v_{1,x},v_{1,i}) \in E_1 , (v_{2,y},v_{2,j}) \in E_2, (v_{3,z},v_{3,k}) \in E_3, \mbox{ or } z = 0\}
\]
in the second case of the recurrence for all $i,j,k$.
For each fixed pair of $(v_{1, x}, v_{1, i}) \in E_1$ and $(v_{2, y}, v_{2, j}) \in E_2$,
we refer the value of $D_{x, y, z}$ 
for all $1 \leq z < k$ such that $(v_{3, z}, v_{3, k}) \in E_3$,
in $O(|E_3|)$ time.
For each fixed $(v_{1, x}, v_{1, i}) \in E_1$,
we refer the value of $D_{x, y, z}$
for all $1 \leq y < j$ such that $(v_{2, y}, v_{2, j}) \in E_2$ and all $1 \leq z < k$ such that $(v_{3, z}, v_{3, k}) \in E_3$,
in $O(|E_2||E_3|)$ time.
Therefore, the total time complexity for computing all $M_{i,j,k}$ for all $i,j,k$ is $O(|E_1| |E_2| |E_3|)$.

\item Third, let us analyze the time cost for computing 
\[
M'_{i,j,k} = \max \{ D_{x,y,k} \mid (v_{1,x},v_{1,i}) \in E_1 , (v_{2,y},v_{2,j}) \in E_2\}
\]
in the third case of the recurrence for all $i,j,k$.
For each fixed pair of $(v_{1, x}, v_{1, i}) \in E_1$ and $(v_{2, y}, v_{2, j}) \in E_2$,
we refer the value of $D_{x, y, k}$ 
for all $1 \leq k \leq |V_3|$,
in $O(|V_3|)$ time.
For each fixed $(v_{1, x}, v_{1, i}) \in E_1$,
we refer the value of $D_{x, y, k}$ 
for all $1 \leq y < j$ such that $(v_{2, y}, v_{2, j}) \in E_2$ and all $1 \leq k \leq |V_3|$,
in $O(|E_2||V_3|)$ time.
Therefore, the total time complexity for computing $M'_{i,j,k}$ for all $i,j,k$ 
is $O(|E_1|  |E_2| |V_3|) \subseteq O(|E_1||E_2||E_3|)$.

\item Fourth, let us analyze the time cost for computing 
\[
M''_{i,j,k} = \max \{ D_{x,j,k} ,D_{i,y,k} \mid (v_{1,x},v_{1,i}) \in E_1 , (v_{2,y},v_{2,j}) \in E_2 \}
\]
in the fourth case of the recurrence for all $i,j,k$.
For each fixed $(v_{1,x},v_{1,i})\in E_1$, 
we refer the value of $D_{x,j,k}$ for all $1 \leq j \leq |V_2|$ and all $1 \leq k \leq |V_3|$ in $O(|V_2||V_3|)$ time.
Similarly, for each fixed $(v_{2,y},v_{2,j}) \in E_2$, 
we refer the value of $D_{i,y,k}$ for all $1 \leq i \leq |V_1|$ and all $1 \leq k \leq |V_3|$ in $O(|V_1||V_3|)$ time.
Therefore, the total time cost for computing $M''_{i,j,k}$ for all $i,j,k$ is $O(|V_3|(|V_2|  |E_1| + |V_1|  |E_2|)) \subseteq O(|E_1||E_2||E_3|)$.

\end{itemize}
Thus the total time complexity is $O(|E_1|  |E_2|  |E_3|)$.
\qed
\end{proof}

An example of computing $D_{i,j,k}$ using dynamic programming is show in Figure~\ref{fig:SEQ-I-L-ac}.
We remark that the recurrence in Equation~(\ref{eqn:SEQ-I-L-nl}) is a natural generalization of the recurrence in Equation~(\ref{eq:SEQ-I-L}) for computing the SEQ-IC-LCS length of given two strings.

Algorithm~\ref{algo:SEQ-I-L-nl} in Appendix~\ref{sec:SEQ-IC-LCS_acyclic_pseudocode} shows 
a pseudo-code of our algorithm which solves Problem~\ref{prob:SEQ-I-L-nl} 
in the case where all $G_1$, $G_2$ and $G_3$ are acyclic.

\begin{figure}[t]
   \centerline{
   \begin{tabular}{ccc}
   \begin{minipage}{0.33\hsize}
    \begin{center}
     \includegraphics[scale=0.4]{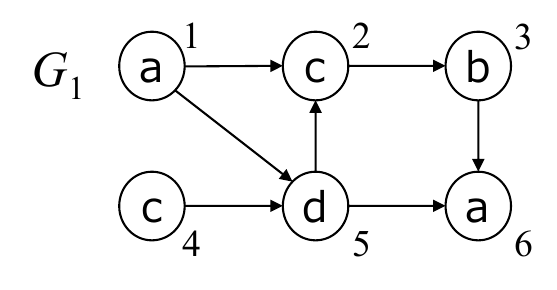}
    \end{center}
    \vspace{-8mm}
    \begin{center}
     \includegraphics[scale=0.4]{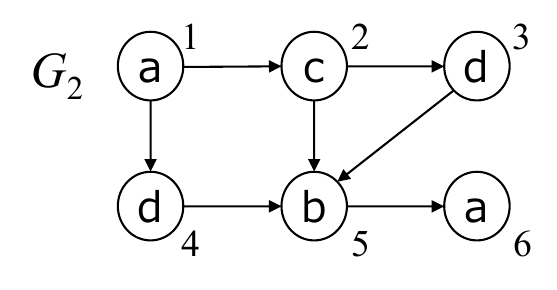}
    \end{center}
    \vspace{-7mm}
    \begin{center}
     \includegraphics[scale=0.4]{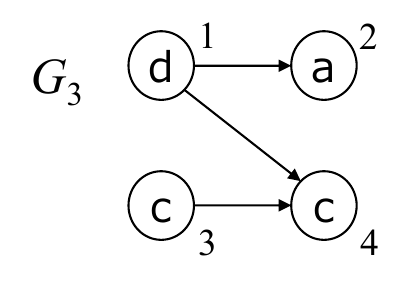}
    \end{center}
    \vspace{-8mm}
    \begin{center}
      \includegraphics[scale=0.4]{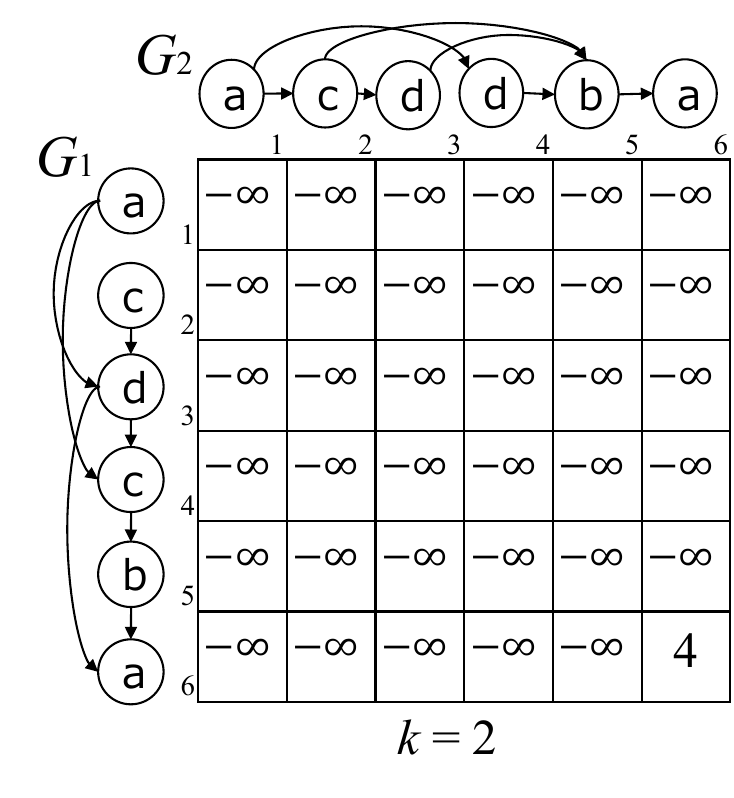}
     \end{center}
   \end{minipage} 
   \begin{minipage}{0.33\hsize}
    \begin{center}
      \includegraphics[scale=0.4]{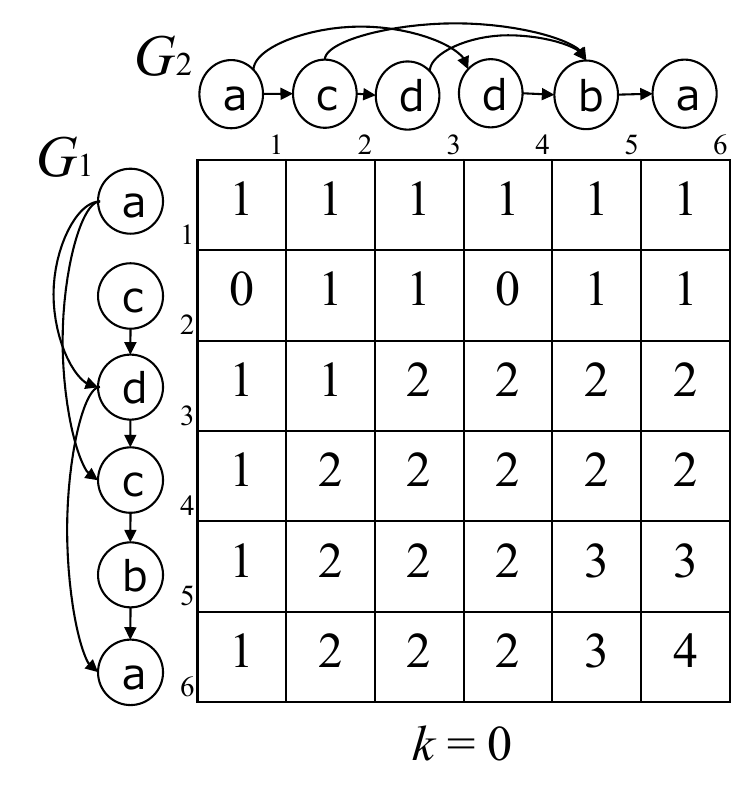}
     \end{center}
     \begin{center}
      \includegraphics[scale=0.4]{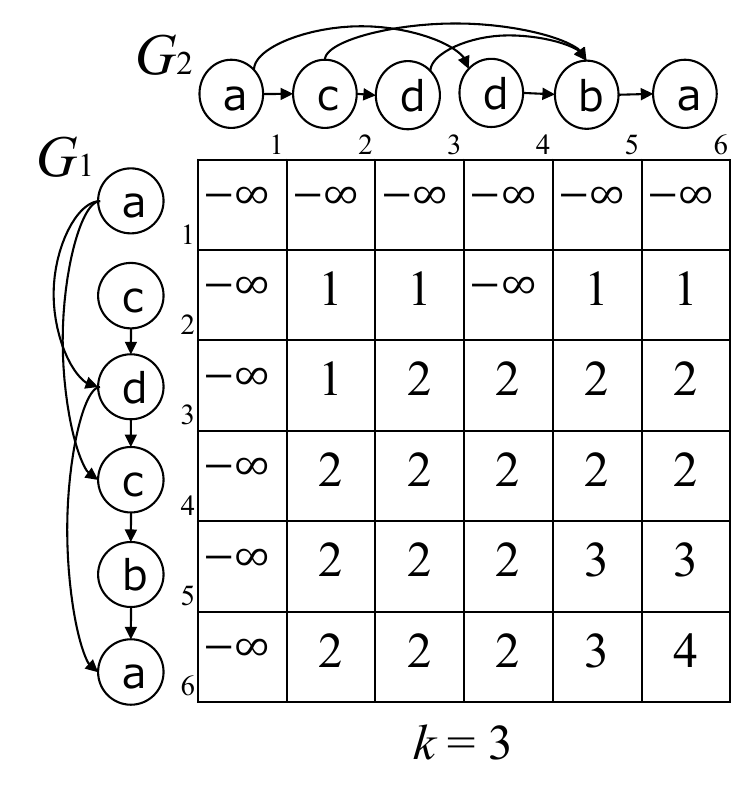}
     \end{center}
    \end{minipage} 
    \begin{minipage}{0.33\hsize}
     \begin{center}
      \includegraphics[scale=0.4]{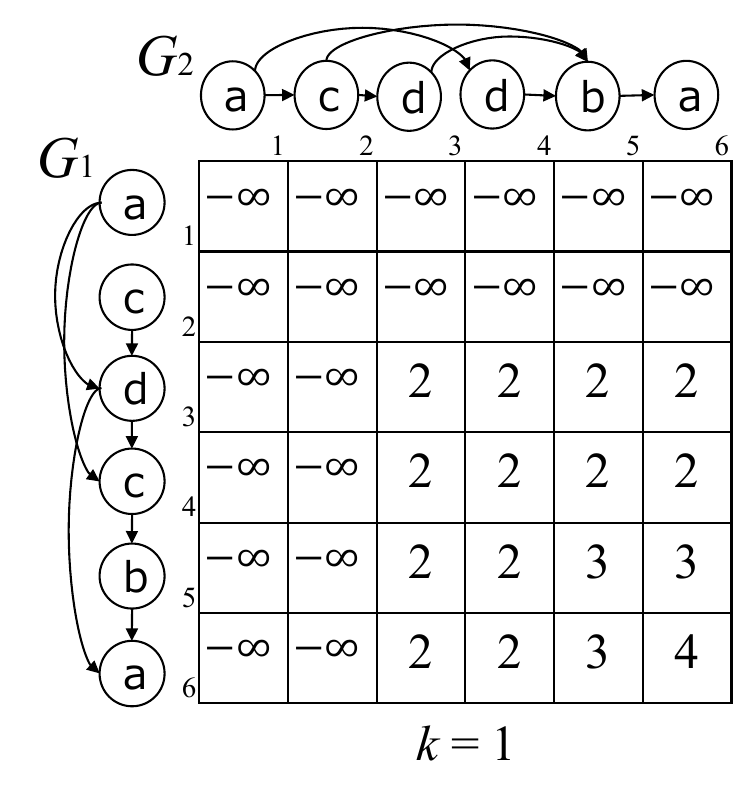}
     \end{center}
     \begin{center}
      \includegraphics[scale=0.4]{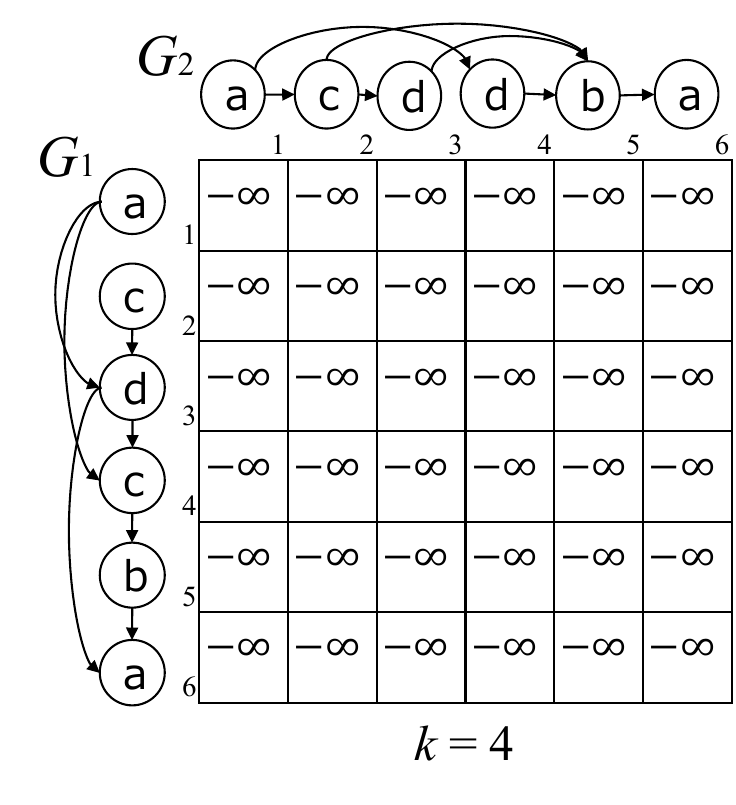}
     \end{center}
   \end{minipage}
   \end{tabular}
   }
    \caption{Example of dynamic programming table $D$ for computing the SEQ-IC-LCS length of
             acyclic labeled graphs $G_1$, $G_2$ and $G_3$. 
             Each vertex is annotated with its topological order.
             In this example, $v_{3,2}$ and $v_{3,4}$ with $k \in \{2,4\}$
             in $G_3$ are vertices with no out-going edges.
             The maximum value of $D_{i,j,k}$ with $k \in \{2,4\}$ is $D_{6,6,2} = 4$,
             and the corresponding SEQ-IC-LCS is $\mathtt{cdba}$ of length $4$.}
    \label{fig:SEQ-I-L-ac}
\end{figure}

\section{Computing SEQ-IC-LCS of Cyclic Labeled Graphs}
\label{sec:SEQ-I-L-c}

In this section, we present an algorithm to solve Problem~\ref{prob:SEQ-I-L-nl} in case where $G_1$ and/or $G_2$ can be cyclic and $G_3$ is acyclic.
We output $\infty$ if the set of output candidates in Problem~\ref{prob:SEQ-I-L-nl} contains a string of infinite length,
and outputs the (finite) SEQ-IC-LCS length otherwise.

\begin{figure}[b!]
\centerline{
\begin{tabular}{ccc}
  \begin{minipage}{0.3\hsize}
   \begin{center}
    \includegraphics[scale=0.35]{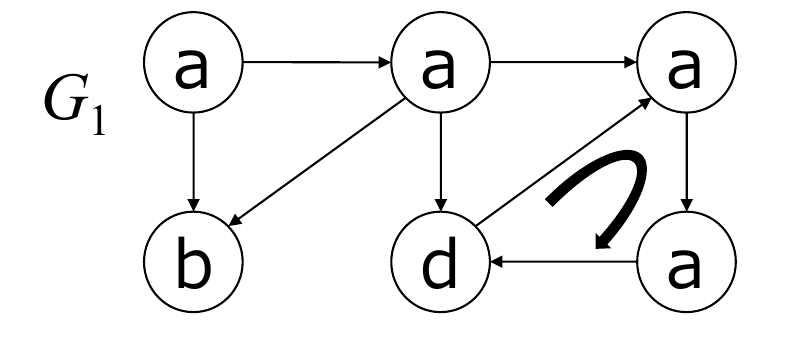}
   \end{center}
   \begin{center}
    \includegraphics[scale=0.35]{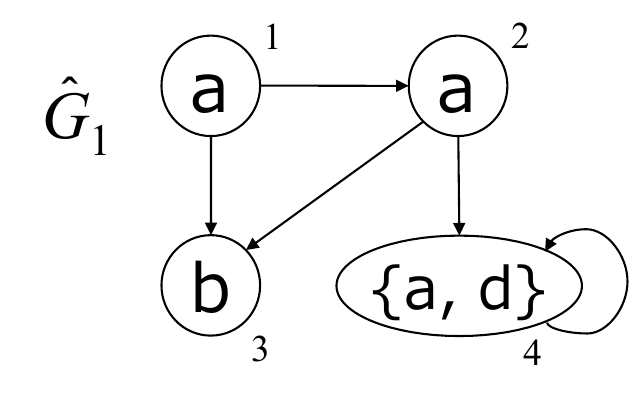}
   \end{center}
   \begin{center}
    \includegraphics[scale=0.35]{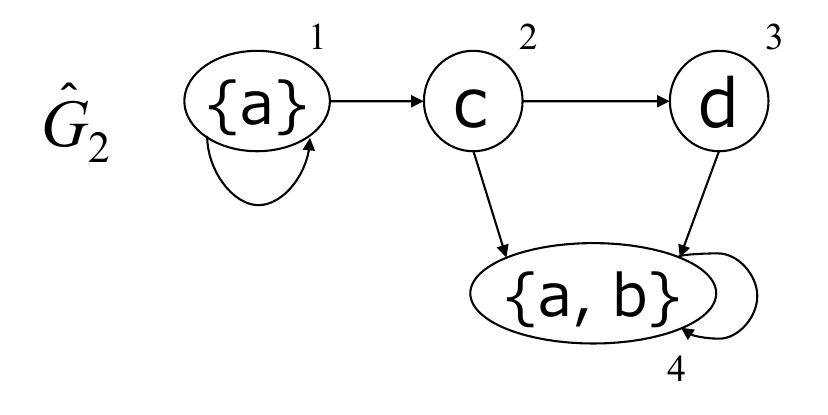}
   \end{center}
   \vspace{-11mm}
   \begin{center}
     \includegraphics[scale=0.35]{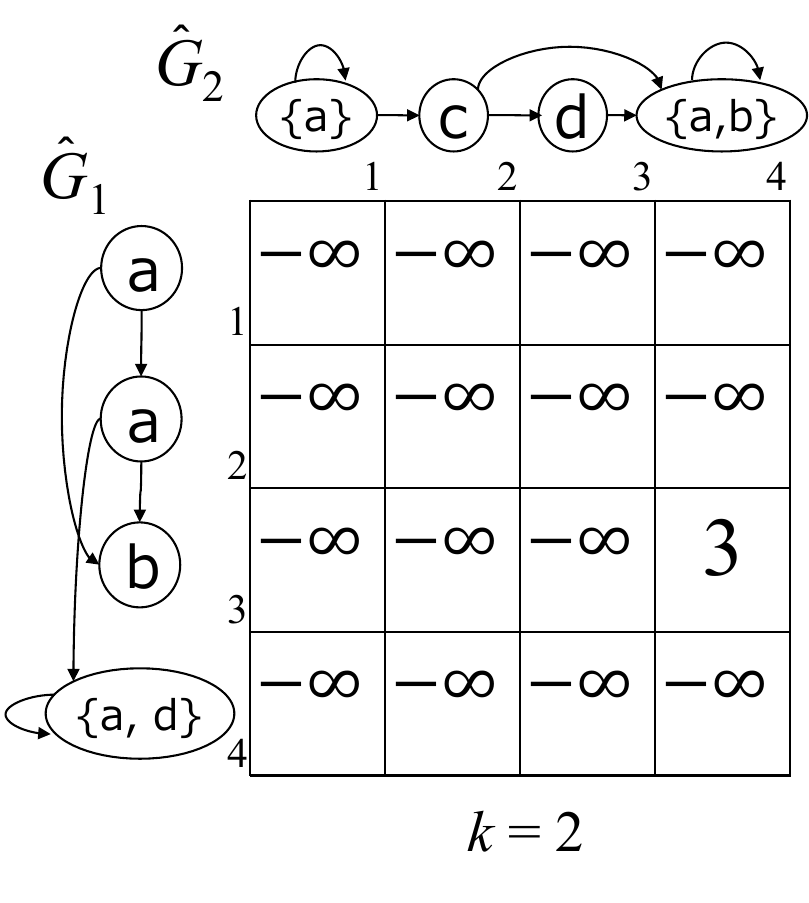}
    \end{center}
  \end{minipage} 
  \begin{minipage}{0.35\hsize}
    \begin{center}
      \includegraphics[scale=0.35]{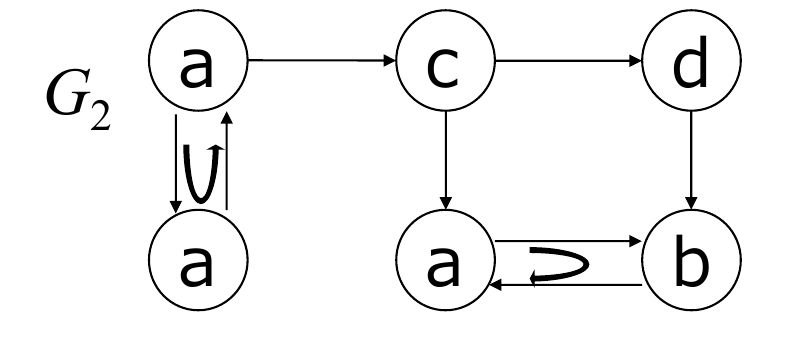}
     \end{center}
   \begin{center}
     \includegraphics[scale=0.35]{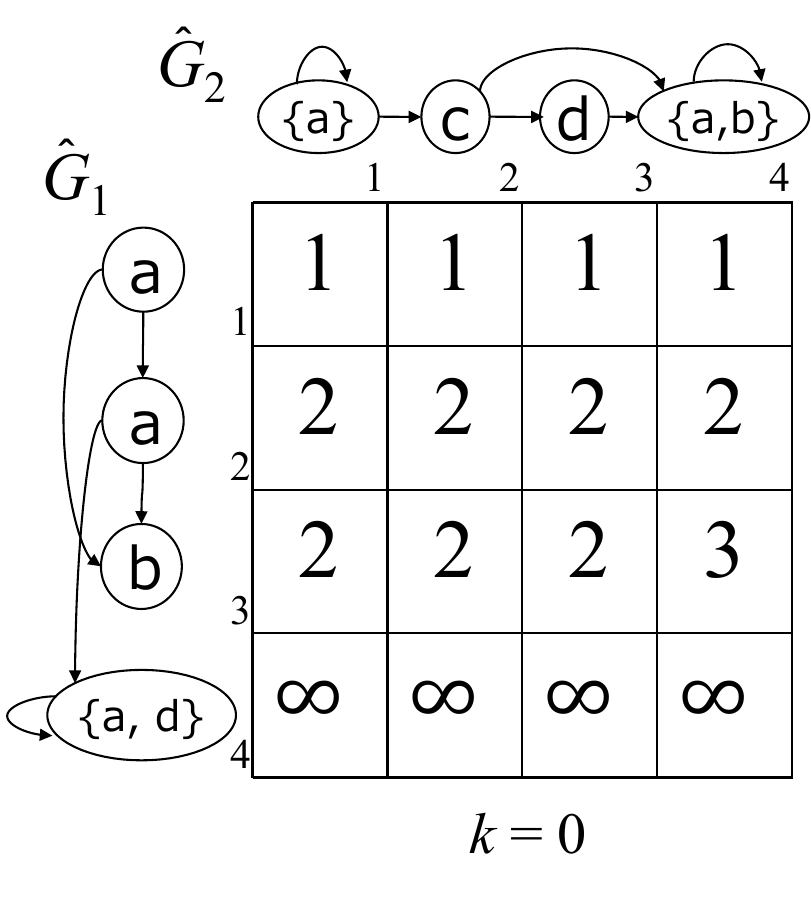}
    \end{center}
    \begin{center}
     \includegraphics[scale=0.35]{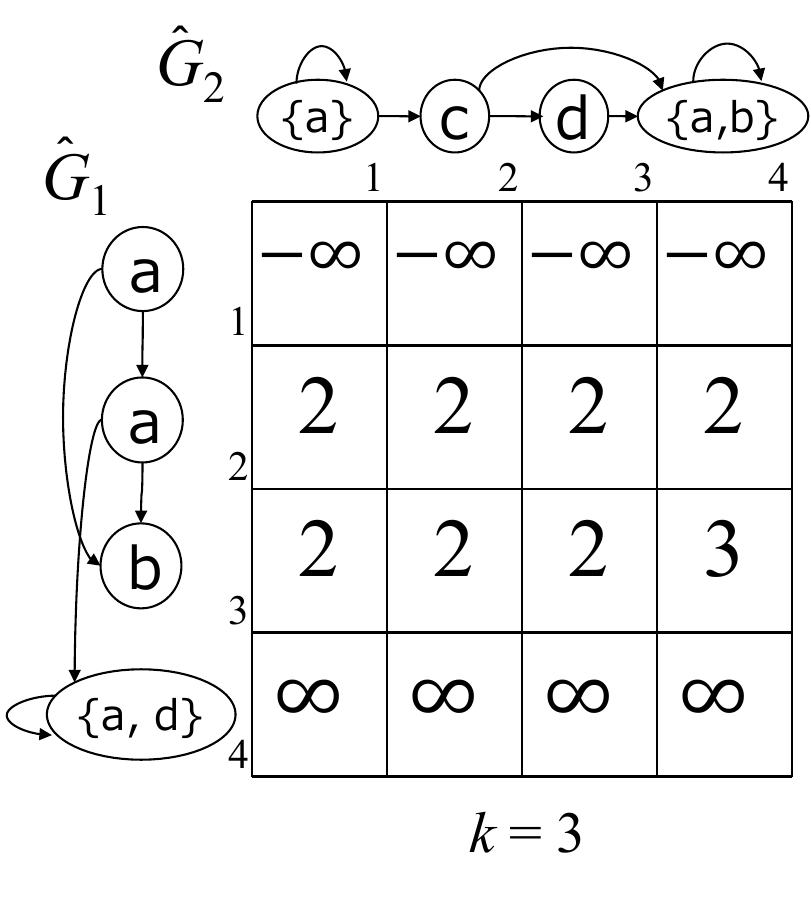}
    \end{center}
   \end{minipage} 
   \begin{minipage}{0.35\hsize}
    \vspace{-1mm}
    \begin{center}
      \includegraphics[scale=0.50]{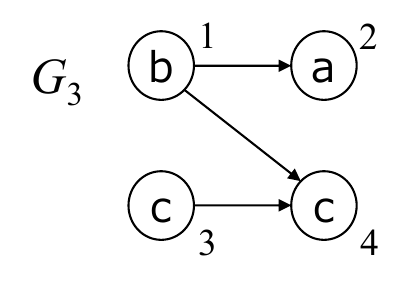}
     \end{center}
     \vspace{-6mm}
    \begin{center}
     \includegraphics[scale=0.35]{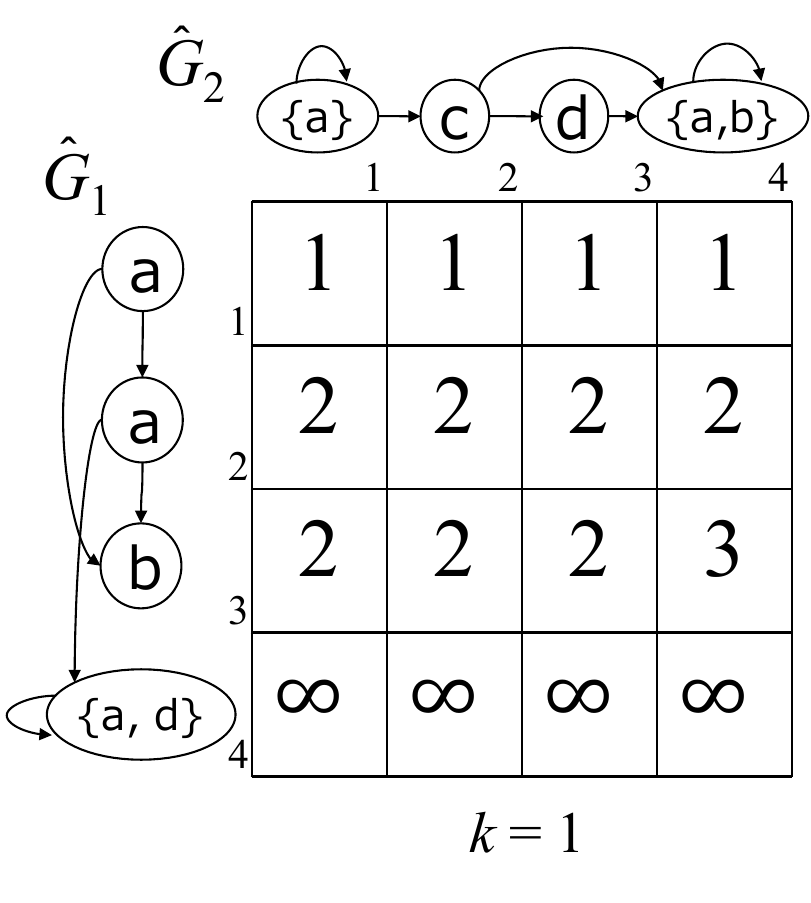}
    \end{center}
    \begin{center}
     \includegraphics[scale=0.35]{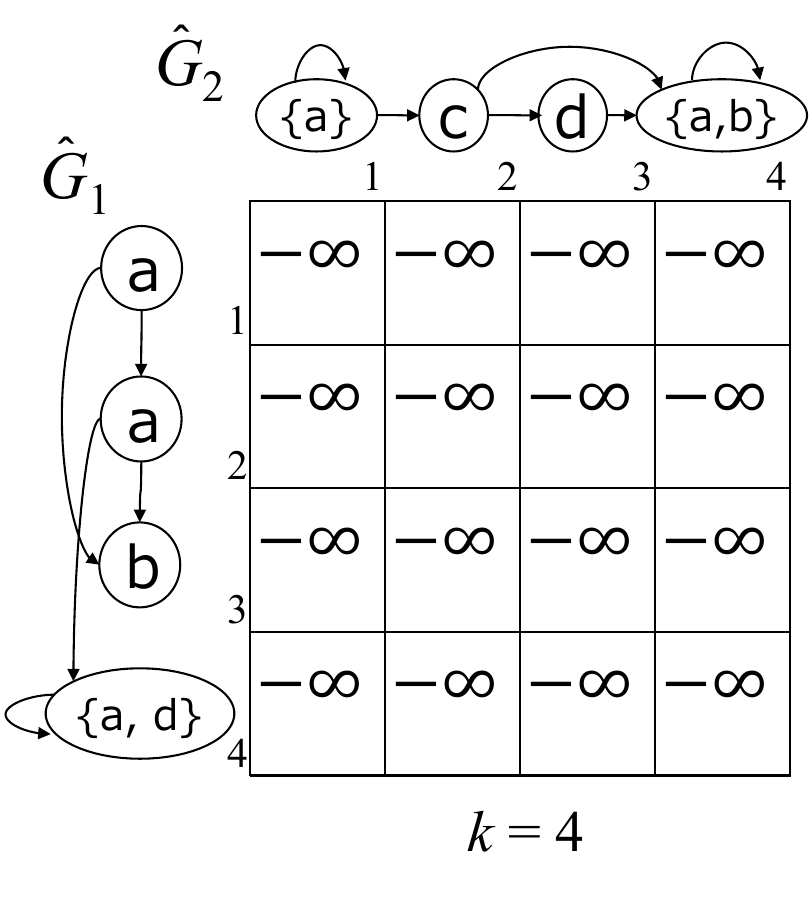}
    \end{center}
  \end{minipage}
\end{tabular}
}
    \caption{Example of dynamic programming table $\hat{D}$ for computing the 
             SEQ-IC-LCS length of cyclic labeled graphs $G_1$ and $G_2$,
             and acyclic labeled graph $G_3$. 
             $\hat{G}_1$ and $\hat{G}_2$ are the labeled graphs which are transformed 
             from $G_1$ and $G_2$ by grouping vertices into strongly connected components.
             Each vertex is annotated with its topological order.
             In this example, $v_{3,2}$ and $v_{3,4}$ with $k \in \{2,4\}$
             in $G_3$ are vertices with no out-going edges.
             The maximum value of $\hat{D}_{i,j,k}$ with $k \in \{2,4\}$ is $\hat{D}_{4,3,2} = 3$,
             and the corresponding SEQ-IC-LCS is $\mathtt{aab}$ of length $3$.             
            }
    \label{fig:SEQ-I-L-c}
\end{figure}

To deal with cyclic graphs, we follow the approach by Shimohira et al.~\cite{PSC2011-17} which transforms a cyclic labeled graph $G = (V, E, \lf)$ 
into an acyclic labeled graph $\hat{G} = (\hat{V}, \hat{E}, \hat{\lf})$
based on the strongly connected components.

For each vertex $v \in V$, let $[v]$ denote the set of vertices that belong to the 
same strongly connected component.
Formally, $\hat{G} = (\hat{V}, \hat{E}, \hat{\lf})$ is defined by
\begin{eqnarray*}
\hat{V} & = & \{[v] \mid v \in V\}, \\
\hat{E} & = & \{([v], [u]) \mid [v] \neq [u], \mbox{$(\hat{v}, \hat{u}) \in E$ for some $\hat{v} \in [v]$, $\hat{u} \in [u]$}\} \cup \{(v, v) \mid |[v]| \geq 2\}, 
\end{eqnarray*}
and $\hat{\lf}([v]) = \{\lf(v) \mid v \in [v]\} \subseteq \Sigma$.
We regard each $[v]$ as a single vertex that is contracted from vertices in $[v]$.
Observe that $\Subseq(\hat{G}) = \Subseq(G)$.
An example of transformed acyclic labeled graphs is shown in Figure~\ref{fig:SEQ-I-L-c}.

\sinote*{added}{%
  It is possible that a vertex $\hat{v} \in \hat{V}$ in
  the transformed graph $\hat{G}$ has a self-loop.
  We regard that a self-loop $(\hat{v}, \hat{v})$ is also an in-coming edge
  of vertex $\hat{v}$.
  We say that vertex $\hat{v}$ does not have in-coming edges \emph{at all},
  if $\hat{v}$ does not have in-coming edges from \emph{any} vertex in $\hat{V}$ (including $\hat{v}$).
}%

Our main result of this section follows:
\begin{theorem}\label{theo:SEQ-I-L-c}
  Problem~\ref{prob:SEQ-I-L-nl}, where input labeled graphs
  $G_1$ and $G_2$ can be cyclic and $G_3$ is acyclic,
  is solvable in $O(|E_1||E_2||E_3| + |V_1||V_2||V_3|\log|\Sigma|)$ 
  time and $O(|V_1||V_2||V_3|)$ space.
\end{theorem}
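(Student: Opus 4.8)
The plan is to reduce the cyclic instance to an acyclic one via strongly connected component (SCC) contraction, exactly as in Shimohira et al.~\cite{PSC2011-17}, and then to re-run the acyclic recurrence of Theorem~\ref{theo:SEQ-I-L-nl} on the contracted graphs, with two modifications forced by the contraction. First I would build $\hat{G}_1$ and $\hat{G}_2$ from $G_1$ and $G_2$ by the construction given above (leaving the acyclic $G_3$ untouched); this costs $O(|E_1|+|E_2|)$ time, preserves $\Subseq(\hat{G}_i)=\Subseq(G_i)$, and yields acyclic graphs that I topologically sort together with $G_3$. The two modifications are: (i) the labels $\hat{\lf}_1,\hat{\lf}_2$ are now \emph{sets} of characters, so every label comparison in Equation~(\ref{eqn:SEQ-I-L-nl}) (e.g.\ $\lf_1(v_{1,i})=\lf_2(v_{2,j})=\lf_3(v_{3,k})$) is replaced by a membership/intersection test asking whether $\lf_3(v_{3,k})$, or merely some common character, lies in $\hat{\lf}_1(v_{1,i})\cap\hat{\lf}_2(v_{2,j})$; and (ii) a self-loop $(\hat{v},\hat{v})$ counts as an in-coming edge, so in each of the four cases a self-loop at $v_{1,i}$ (resp.\ $v_{2,j}$) admits $v_{1,i}$ itself (resp.\ $v_{2,j}$) as a valid predecessor $v_{1,x}$ (resp.\ $v_{2,y}$).

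The genuinely new ingredient is detecting the output $\infty$. A self-loop vertex with label set $S$ can emit arbitrarily long strings over $S$, so an infinite common subsequence of $G_1$ and $G_2$ exists exactly when some character $c$ lies in the label set of a self-loop vertex that is forward-reachable in $\hat{G}_1$ and, simultaneously, of a self-loop vertex forward-reachable in $\hat{G}_2$; then $c^{\infty}$ is a common subsequence. For SEQ-IC-LCS we must additionally embed a full maximal-path string $q$ of $G_3$; since $G_3$ is acyclic, every such $q$ is finite and is therefore consumed within a finite prefix of the infinite witness. Hence the output is $\infty$ if and only if there is a configuration $(v_{1,i},v_{2,j},v_{3,k})$ with $v_{3,k}$ a sink of $G_3$ that is reachable as a constraint-satisfying common subsequence (some maximal $q$ of $G_3$ fully embedded) and from which a shared self-loop character is forward-reachable in both $\hat{G}_1$ and $\hat{G}_2$. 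I would detect the latter with a balanced tree holding the self-loop characters of one graph and querying it with those of the other, in the style of Shimohira et al.~\cite{PSC2011-17}, propagating the constraint-satisfied reachability along the same topological order as the DP, at a cost of $O(\log|\Sigma|)$ per query.

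If the output is not $\infty$, then self-loops can only contribute boundedly, so the within-SCC self-dependency created by a self-loop edge reaches a finite local fixpoint and $\hat{D}_{i,j,k}$ is well-defined at every cell; I would fill the $O(|V_1||V_2||V_3|)$ table in topological order and return $\max\{\hat{D}_{i,j,k}\mid v_{3,k}\text{ has no out-going edge}\}$. The transition cost is identical to Theorem~\ref{theo:SEQ-I-L-nl}: the four cases sum to $O(|E_1||E_2||E_3|)$, the $O(|\hat{V}_1|+|\hat{V}_2|)$ extra self-loop edges being absorbed into $|E_1|,|E_2|$, while the set-label tests add one balanced-tree operation per cell, i.e.\ $O(|V_1||V_2||V_3|\log|\Sigma|)$ overall, and the space stays $O(|V_1||V_2||V_3|)$. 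I expect the main obstacle to be the $\infty$ analysis: precisely characterizing when the subsequence constraint can be satisfied together with an infinite extension, and arguing the converse, that whenever the output is finite the self-loop recurrence stays bounded so that the fixpoint computation is correct.
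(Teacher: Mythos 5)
Your overall route --- contracting $G_1,G_2$ by strongly connected components into acyclic $\hat{G}_1,\hat{G}_2$ with set-valued labels and self-loops, re-running the acyclic recurrence of Theorem~\ref{theo:SEQ-I-L-nl} with label comparisons replaced by intersection tests implemented via balanced trees, and charging $O(\log|\Sigma|)$ per cell --- is exactly the paper's. The genuine gap is in your characterization of when the output is $\infty$. You require a configuration $(\hat v_{1,i},\hat v_{2,j},v_{3,k})$ with $v_{3,k}$ a sink of $G_3$ at which some maximal path string $q$ is already fully embedded, \emph{and from which} a shared self-loop character is forward-reachable in both graphs. That only detects pumping that happens \emph{after} the constraint has been consumed. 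Pumping can equally well occur before or during the embedding of $q$, and it still makes the candidate lengths unbounded. Concretely, let $G_1=G_2$ consist of a two-vertex cycle labeled $\mathtt{c}$ followed by a path spelling $\mathtt{ab}$, and let $G_3$ be the two-vertex path spelling $\mathtt{ab}$. Then $\mathtt{c}^n\mathtt{ab}$ is, for every $n$, a common subsequence containing $\mathtt{ab}$, so the correct output is $\infty$; but after embedding $\mathtt{ab}$ one sits at the $\mathtt{b}$-vertices, no self-loop vertex is forward-reachable from there, and your test reports a finite answer. Your justification ("$q$ is consumed within a finite prefix of the infinite witness") implicitly assumes the witness is a single infinite path, whereas unboundedness can also arise from a family of finite witnesses that pump before the constraint.

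The paper avoids a separate reachability pass entirely: it folds $\infty$-generation into the recurrence through a coefficient $\delta$ that equals $\infty$ exactly when both $\hat v_{1,i}$ and $\hat v_{2,j}$ are cyclic (contracted) vertices, together with the convention $\infty+(-\infty)=-\infty$, so that a pump is credited only when the constraint prefix up to $v_{3,k}$ (or $k=0$) is still satisfiable; the resulting $\infty$ values then propagate forward through the ordinary max/increment transitions to the cells whose third coordinate is a sink of $G_3$. This catches pumps at any position along a witness. To repair your argument you would need to replace "constraint fully embedded, then pump reachable" by "pump available at some cell with a non-$(-\infty)$ predecessor, from which the remaining suffix of some maximal path of $G_3$ is still embeddable" --- which is precisely what the propagated-$\infty$ dynamic program computes. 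Your "local fixpoint" treatment of self-loops in the finite case is consistent with the paper (self-loops counted as ordinary in-coming edges), but it presupposes that the $\infty$ cases have already been separated out correctly, so the detection above is the step you must fix.
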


\begin{proof}
We first transform cyclic labeled graphs $G_1$ and $G_2$ into 
corresponding acyclic labeled graphs $\hat{G}_1$ and $\hat{G}_2$,
as described previously.
For $1 \leq i \leq |\hat{V}_1|$ and $1 \leq j \leq |\hat{V}_2|$,
let $\hat{v}_{1,i}$ and $\hat{v}_{2,j}$ respectively denote the $i$th and $j$th vertices in $\hat{G}_1$ and $\hat{G}_2$ in topological order.
Let $v_{3,k}$ denote the $k$-th vertex in topological ordering in $G_3$ for  $1 \leq k \leq |V_3|$.

Let
\begin{equation*}
\hat{\mathsf{S}}_{\mathrm{IC}}(\hat{v}_{1,i}, \hat{v}_{2,j}, v_{3,k}) = 
\left\{
z ~\middle|~
\begin{array}{l}
  \exists q \in \lf_3(\MP(v_{3,k})) \mbox{ such that } q \in \Subseq(z) \\
  \mbox{and } z \in \Subseq(\hat{\lf}_1(\P(\hat{v}_{1,i}))) \cap \Subseq(\hat{\lf}_2(\P(\hat{v}_{2,j})))
\end{array}
\right\}.
\end{equation*}
Let $\hat{D}_{i,j,k}$ denote the length of a longest string in $\hat{\mathsf{S}}_{\mathrm{IC}}(\hat{v}_{1,i}, \hat{v}_{2,j}, v_{3,k})$.
For convenience, we let $\hat{D}_{i,j,k} = -\infty$ if $\hat{\mathsf{S}}_{\mathrm{IC}}(\hat{v}_{1,i}, \hat{v}_{2,j}, v_{3,k}) = \emptyset$.
The solution to Problem~\ref{prob:SEQ-I-L-nl} (the SEQ-IC-LCS length) is the maximum value of $\hat{D}_{i,j,k}$ for which $v_{3,k}$ has no out-going edges (i.e. $v_{3,k}$ is the end of a maximal path in $G_3$).

$\hat{D}_{i,j,k}$ can be computed as follows:
\begin{enumerate}
\item If both $\hat{v}_{1, i}$ and $\hat{v}_{2, j}$ are cyclic vertices (i.e. $|[\hat{v}_{1, i}]| \geq 2$ and $|[\hat{v}_{2, j}]| \geq 2$),
  \sinote*{added}{%
    then remark that both $\hat{v}_{1, i}$ and $\hat{v}_{2, j}$ have some self-loop(s).
  }%
  There are four cases to consider:
  \begin{enumerate}
    \item If $k = 0$, there are two cases to consider: \label{item:0_1}
    \begin{enumerate}
      \item If $\hat{\lf}_1(\hat{v}_{1, i}) \cap \hat{\lf}_2(\hat{v}_{2, j}) \neq \emptyset$, then clearly $\hat{D}_{i, j, k} = \infty$.
      \item Otherwise, there are two cases to consider: \label{item:ii}
      \begin{enumerate}
        \item If the in-coming edges of $\hat{v}_{1, i}$ are $\hat{v}_{2, j}$ only self-loops,
             then clearly $\hat{D}_{i, j, k}=0$.
        \item Otherwise (\sinote*{added}{$\hat{v}_{1,i}$ has some in-coming edge(s) other than self-loops, or $\hat{v}_{2,j}$ has some in-coming edge(s) other than self-loops}),
              let $\hat{v}_{1,x}$ and $\hat{v}_{2, y}$ be any nodes such that
              $(\hat{v}_{1,x}, \hat{v}_{1, i}) \in \hat{E}_1$ and $(\hat{v}_{2, y}, \hat{v}_{2, j}) \in \hat{E}_2$, 
              respectively.
              Let $s$ be a longest string in the set
              $\Subseq(\hat{\lf}_1(\LMP(\hat{v}_{1,i}))) \cap \Subseq(\hat{\lf}_2(\LMP(\hat{v}_{2,j})))$.
              Assume on the contrary that there is a string
              $t \in \Subseq(\hat{\lf}_1(\LMP(\hat{v}_{1,x}))) \cap \Subseq(\hat{\lf}_2(\LMP(\hat{v}_{2,j})))$
              such that $|t| > |s|$.
              This contradicts that $s$ is a longest common subsequence of 
              $\hat{\lf}_1(\LMP(\hat{v}_{1,i}))$ and $\hat{\lf}_2(\LMP(\hat{v}_{2,j}))$, 
              since  $\Subseq(\hat{\lf}_1(\LMP(\hat{v}_{1,x}))) \cap \Subseq(\hat{\lf}_2(\LMP(\hat{v}_{2,j}))) \subseteq  \Subseq(\hat{\lf}_1(\LMP(\hat{v}_{1,i}))) \cap \Subseq(\hat{\lf}_2(\LMP(\hat{v}_{2,j})))$.
              Hence $|t| \leq |s|$.
              If $\hat{v}_{1, x}$ is a vertex satisfying $\hat{D}_{x,j,k} = |s|$, then $\hat{D}_{i,j,k} = \hat{D}_{x,j,k}$.
              Similarly, if $\hat{v}_{2,y}$ is a vertex satisfying $\hat{D}_{i,y,k} = |s|$,
              then $\hat{D}_{i,j,k} = \hat{D}_{i,y,k}$.
              Note that such $\hat{v}_{1,x}$ or $\hat{v}_{2,y}$ always exists.
      \end{enumerate}
    \end{enumerate}
    \item If $k > 0$ and $\hat{\lf}_1(\hat{v}_{1, i}) \cap \hat{\lf}_2(\hat{v}_{2, j}) \cap \{\lf_3(v_{3, k})\} \neq \emptyset$, there are two cases to consider:
    \begin{enumerate}
      \item If $v_{3, k}$ has no in-coming edges,
          let $\hat{v}_{1,x}$ and $\hat{v}_{2,y}$ be any nodes such that
             $(\hat{v}_{1,x}, \hat{v}_{1, i}) \in \hat{E}_1$ and $(\hat{v}_{2, y}, \hat{v}_{2, j}) \in \hat{E}_2$, 
          respectively \sinote*{added}{(these edges may be self-loops)}.
             If $\hat{D}_{x,y,0} = -\infty$ for all $1 \leq x < i$ and $1 \leq y < j$,
             then clearly $\hat{D}_{i,j,k} = -\infty$. 
             Otherwise, clearly $\hat{D}_{i,j,k} = \infty$.
      \item Otherwise ($v_{3, k}$ has some in-coming edge(s)),
             let 
             $\hat{v}_{1,x}$, $\hat{v}_{2,y}$ and $v_{3,z}$ be any nodes such that
             $(\hat{v}_{1,x}, \hat{v}_{1, i}) \in \hat{E}_1$, $(\hat{v}_{2, y}, \hat{v}_{2, j}) \in \hat{E}_2$ and $(v_{3, z}, v_{3, k}) \in E_3$, 
             respectively \sinote*{added}{(the first two edges may be self-loops)}.
             If $\hat{D}_{x,y,z} = -\infty$ for all $1 \leq x < i$ and $1 \leq y < j$,
             then clearly $\hat{D}_{i,j,k} = -\infty$. 
             Otherwise, $\hat{D}_{i,j,k} = \infty$. 
    \end{enumerate}
    \item If $k > 0$ and $\hat{\lf}_1(\hat{v}_{1, i}) \cap \hat{\lf}_2(\hat{v}_{2, j}) \cap \{\lf_3(v_{3, k})\} = \emptyset$ and $\hat{\lf}_1(\hat{v}_{1, i}) \cap \hat{\lf}_2(\hat{v}_{2, j}) \neq \emptyset$, there are two cases to consider:
    \begin{enumerate}
      \item If \sinote*{modified}{%
            the in-coming edges of $\hat{v}_{1, i}$ are $\hat{v}_{2, j}$ only self-loops,
            }%
            then clearly $\hat{D}_{i, j, k} = -\infty$. 
      \item Otherwise (\sinote*{modified}{$\hat{v}_{1, i}$ has some in-coming edge(s) other than self-loops, or $\hat{v}_{2, j}$ has some in-coming edge(s) other than self-loops}),
            let $\hat{v}_{1,x}$ and $\hat{v}_{2,y}$ be any nodes such that
            $(\hat{v}_{1,x}, \hat{v}_{1, i}) \in \hat{E}_1$ and $(\hat{v}_{2, y}, \hat{v}_{2, j}) \in \hat{E}_2$, 
            respectively.
            If all $\hat{D}_{x,y,k} = -\infty$,
            then clearly $\hat{D}_{i,j,k} = -\infty$.
            Otherwise, clearly $\hat{D}_{i,j,k} = \infty$. 
    \end{enumerate}
    \item If $k > 0$ and $\hat{\lf}_1(\hat{v}_{1, i}) \cap \hat{\lf}_2(\hat{v}_{2, j}) = \emptyset$, there are two cases to consider: \label{item:d}
    \begin{enumerate}
      \item If \sinote*{modified}{the in-coming edges of $\hat{v}_{1,i}$ and $\hat{v}_{2,j}$ are only self-loops}, 
           then clearly $\hat{D}_{i, j, k} = -\infty$.
      \item Otherwise (\sinote*{modified}{$\hat{v}_{1, i}$ has some in-coming edge(s) other than self-loops, or $\hat{v}_{2, j}$ has some in-coming edge(s) other than self-loops}),
            let $\hat{v}_{1,x}$ and $\hat{v}_{2, y}$ be any nodes such that
            $(\hat{v}_{1,x}, \hat{v}_{1, i}) \in \hat{E}_1$ and $(\hat{v}_{2, y}, \hat{v}_{2, j}) \in \hat{E}_2$, 
            respectively.
            Let $s$ be a longest string in $\hat{\mathsf{S}}_{\mathrm{IC}}(\hat{v}_{1,i}, \hat{v}_{2,j}, v_{3,k})$.
            Assume on the contrary that there exists a string 
            $t \in \hat{\mathsf{S}}_{\mathrm{IC}}(\hat{v}_{1,i}, \hat{v}_{2,j}, v_{3,k})$
            such that $|t| > |s|$.
            This contradicts that $s$ is a SEQ-IC-LCS of 
            $\hat{\lf}_1(\LMP(\hat{v}_{1,i}))$, $\hat{\lf}_2(\LMP(\hat{v}_{2,j}))$ and $\lf_3(\MP(v_{3,k}))$, 
            since  $\hat{\mathsf{S}}_{\mathrm{IC}}(\hat{v}_{1,x}, \hat{v}_{2,y}, v_{3,k}) \subseteq  \hat{\mathsf{S}}_{\mathrm{IC}}(\hat{v}_{1,i}, \hat{v}_{2,j}, v_{3,k})$.
            Hence $|t| \leq |s|$.
            If $\hat{v}_{1, x}$ is a vertex satisfying $\hat{D}_{x, j, k} = |z|$, then $\hat{D}_{i,j,k} = \hat{D}_{x,j,k}$.
            Similarly, if $\hat{v}_{2,y}$ is a vertex satisfying $\hat{D}_{i,y,k} = |s|$,
            then $\hat{D}_{i,j,k} = \hat{D}_{i,y,k} $.
            Note that such $\hat{v}_{1,x}$ or $\hat{v}_{2,y}$ always exists.
    \end{enumerate}
  \end{enumerate}
  \item Otherwise (\sinote*{added}{$v_{1,i}$ is not a cyclic vertex and/or $v_{2,j}$ is not a cyclic vertex}), there are four cases to consider:
  \begin{enumerate}
    \item If $k = 0$, there are two cases to consider: \label{item:0_2}
    \begin{enumerate}
      \item If $\hat{\lf}_1(\hat{v}_{1, i}) \cap \hat{\lf}_2(\hat{v}_{2, j}) \neq \emptyset$, there are two cases to consider:
      \begin{enumerate}
        \item If \sinote*{modified}{$\hat{v}_{1, i}$ does not have in-coming edges at all or $\hat{v}_{2, j}$ does not have in-coming edges at all}, 
              then clearly $\hat{D}_{i,j,k} = 1$.
        \item Otherwise (both $\hat{v}_{1, i}$ and $\hat{v}_{2, j}$ have \sinote*{modified}{some in-coming edge(s) including self-loops}), 
              let $\hat{v}_{1,x}$ and $\hat{v}_{2, y}$ be any nodes such that
              $(\hat{v}_{1,x}, \hat{v}_{1, i}) \in \hat{E}_1$ and $(\hat{v}_{2, y}, \hat{v}_{2, j}) \in \hat{E}_2$, 
              respectively.
              Let $s$ be a longest string in the set
              $\Subseq(\hat{\lf}_1$ $(\LMP(\hat{v}_{1,i}))) \cap \Subseq(\hat{\lf}_2(\LMP(\hat{v}_{2,j})))$.
              Assume on the contrary that 
              there is a string $t \in \Subseq(\hat{\lf}_1(\LMP(\hat{v}_{1,x}))) \cap 
              \Subseq(\hat{\lf}_2(\LMP(\hat{v}_{2,y})))$ such that $|t| > |s| - 1$.
              This contradicts that $s$ is a longest common subsequence of 
              $\hat{\lf}_1(\LMP(\hat{v}_{1,i}))$ and $\hat{\lf}_2(\LMP(\hat{v}_{2,j}))$,
              since $\hat{\lf}_1(\hat{v}_{1, i}) \cap \hat{\lf}_2(\hat{v}_{2, j}) \neq \emptyset$.
              Hence $|t| \leq |s| - 1$. 
              If $\hat{v}_{1, x}$ and $\hat{v}_{2, y}$ are vertices
              satisfying $\hat{D}_{x, y, k} = |s| - 1$, then $\hat{D}_{i, j, k} = \hat{D}_{x, y, k} + 1$.
              Note that such $\hat{v}_{1, x}$ and $\hat{v}_{2, y}$ always exist.
       \end{enumerate}
      \item Otherwise, then this case is the same as Case~\ref{item:ii}.
    \end{enumerate}
    \item If $\hat{\lf}_1(\hat{v}_{1, i}) \cap \hat{\lf}_2(\hat{v}_{2, j}) \cap \{\lf_3(v_{3, k})\} \neq \emptyset$, there are three cases to consider:
    \begin{enumerate}
      \item If \sinote*{modified}{$\hat{v}_{1, i}$ does not have in-coming edges at all or $\hat{v}_{2, j}$ does not have in-coming edges at all, and if $v_{3, k}$ does not have in-coming edges}, 
             then clearly $\hat{D}_{i, j, k} = 1$.
      \item If \sinote*{modified}{$\hat{v}_{1, i}$ does not have in-coming edges at all or $\hat{v}_{2, j}$ does not have in-coming edge at all, and if $v_{3, k}$ has some in-coming edge(s)},
             then clearly $\hat{D}_{i, j, k} = -\infty$. 
      \item If both $\hat{v}_{1, i}$ and $\hat{v}_{2, j}$ have \sinote*{modified}{some in-coming edge(s) including self-loops} and $v_{3, k}$ does not have in-coming edges,
             let $\hat{v}_{1,x}$ and $\hat{v}_{2,y}$ be any nodes such that
             $(\hat{v}_{1,x}, \hat{v}_{1, i}) \in \hat{E}_1$ and $(\hat{v}_{2, y}, \hat{v}_{2, j}) \in \hat{E}_2$, 
             respectively.
             Let $s$ be a longest string in the set
             $\Subseq(\hat{\lf}_1(\LMP(\hat{v}_{1,i}))) \cap \Subseq(\hat{\lf}_2(\LMP(\hat{v}_{2,j})))$.
             Assume on the contrary that 
             there exists a string $t \in \Subseq(\hat{\lf}_1(\LMP(\hat{v}_{1,x}))) \cap 
             \Subseq(\hat{\lf}_2(\LMP(\hat{v}_{2,y})))$ such that $|t| > |s| - 1$.
             This contradicts that $s$ is a longest common subsequence of 
             $\hat{\lf}_1(\LMP(\hat{v}_{1,i}))$ and $\hat{\lf}_2(\LMP(\hat{v}_{2,j}))$,
             since $\hat{\lf}_1(\hat{v}_{1, i}) \cap \hat{\lf}_2(\hat{v}_{2, j}) \neq \emptyset$.
             Hence $|t| \leq |s| - 1$. 
             If $\hat{v}_{1, x}$ and $\hat{v}_{2, y}$ are vertices
             satisfying $\hat{D}_{x, y, 0} = |s| - 1$, then $\hat{D}_{i, j, k} = \hat{D}_{x, y, 0} + 1$.
             Note that such $\hat{v}_{1, x}$ and $\hat{v}_{2, y}$ always exist.
      \item Otherwise (all $\hat{v}_{1,i}$, $\hat{v}_{2,j}$, and $\hat{v}_{3,k}$ have \sinote*{modified}{some in-coming edge(s) including self-loops}),
             let $\hat{v}_{1,x}$, $\hat{v}_{2,y}$ and $v_{3,z}$ be any nodes such that
             $(\hat{v}_{1,x}, \hat{v}_{1, i}) \in \hat{E}_1$, $(\hat{v}_{2, y}, \hat{v}_{2, j}) \in \hat{E}_2$, and $(v_{3, z}, v_{3, k}) \in E_3$, 
             respectively.
             Let $s$ be a longest string in $\hat{\mathsf{S}}_{\mathrm{IC}}(\hat{v}_{1,i}, \hat{v}_{2,j}, v_{3,k})$.
             Assume on the contrary that 
             there exists a string $t \in \hat{\mathsf{S}}_{\mathrm{IC}}(\hat{v}_{1,x}, \hat{v}_{2,y}, v_{3,z})$ such that $|t| > |s|-1$.
             This contradicts that $s$ is a SEQ-IC-LCS of 
             $\hat{\lf}_1(\LMP(\hat{v}_{1,i}))$, $\hat{\lf}_2(\LMP(\hat{v}_{2,j}))$ and $\lf_3(\MP(v_{3,k}))$,
             since $\hat{\lf}_1(\hat{v}_{1, i}) \cap \hat{\lf}_2(\hat{v}_{2, j}) \cap {\lf_3(v_{3, k})} \neq \emptyset$.
             Hence $|t| \leq |s| - 1$. 
             If $\hat{v}_{1, x}$, $\hat{v}_{2, y}$ and $v_{3, z}$ are vertices
             satisfying $\hat{D}_{x, y, z} = |s| - 1$, then $\hat{D}_{i, j, k} = \hat{D}_{x, y, z} + 1$.
             Note that such $\hat{v}_{1, x}$, $\hat{v}_{2, y}$ and $v_{3, z}$ always exist.
    \end{enumerate}
    \item If $\hat{\lf}_1(\hat{v}_{1, i}) \cap \hat{\lf}_2(\hat{v}_{2, j}) \cap \{\lf_3(v_{3, k})\} = \emptyset$ and $\hat{\lf}_1(\hat{v}_{1, i}) \cap \hat{\lf}_2(\hat{v}_{2, j}) \neq \emptyset$, 
    there are two cases to consider:
    \begin{enumerate}
      \item \sinote*{modified}{If $\hat{v}_{1, i}$ does not have in-coming edges at all or $\hat{v}_{2, j}$ does not have in-coming edges at all},
            then clearly $\hat{D}_{i, j, k} = -\infty$.
      \item Otherwise (both $\hat{v}_{1, i}$ and $\hat{v}_{2, j}$ have some \sinote*{modified}{in-coming edges including self-loops}),
            let $\hat{v}_{1,x}$ and $\hat{v}_{2,y}$ be any nodes such that
            $(\hat{v}_{1,x}, \hat{v}_{1, i}) \in \hat{E}_1$ and $(\hat{v}_{2, y}, \hat{v}_{2, j}) \in \hat{E}_2$, 
            respectively.
            Let $s$ be a longest string in $\hat{\mathsf{S}}_{\mathrm{IC}}(\hat{v}_{1,i}, \hat{v}_{2,j}, v_{3,k})$.
            Assume on the contrary that 
            there exists a string $t \in \hat{\mathsf{S}}_{\mathrm{IC}}(\hat{v}_{1,x}, \hat{v}_{2,y}, v_{3,k})$ such that $|t| > |s|-1$.
            This contradicts that $s$ is a SEQ-IC-LCS of 
            $\hat{\lf}_1(\LMP(\hat{v}_{1,i}))$, $\hat{\lf}_2(\LMP(\hat{v}_{2,j}))$ and $\lf_3(\MP(v_{3,k}))$,
            since $\hat{\lf}_1(\hat{v}_{1, i}] \cap \hat{\lf}_2(\hat{v}_{2, j}) \neq \emptyset$.
            Hence $|t| \leq |s| - 1$. 
            If $\hat{v}_{1, x}$, $\hat{v}_{2, y}$ and $v_{3, k}$ are vertices
            satisfying $\hat{D}_{x, y, k} = |s| - 1$, then $\hat{D}_{i, j, k} = \hat{D}_{x, y, k} + 1$.
            Note that such $\hat{v}_{1, x}$, $\hat{v}_{2, y}$ and $v_{3, k}$ always exist.
    \end{enumerate}
    \item If $\hat{\lf}_1(\hat{v}_{1, i}) \cap \hat{\lf}_2(\hat{v}_{2, j}) = \emptyset$, then this case is the same as Case~\ref{item:d}.
  \end{enumerate}
\end{enumerate}

The above arguments lead us to the following recurrence:
\begin{eqnarray} 
  \lefteqn{\hat{D}_{i,j,k} =} \nonumber\\ 
   & & \begin{cases}
     \delta \! + \! \max\left(
     \left\{
     \hat{D}_{x, y, k} ~\middle|~ 
     \begin{array}{l}
         (\hat{v}_{1,x},\hat{v}_{1,i}) \! \in \! \hat{E}_1, \\
         (\hat{v}_{2,y},\hat{v}_{2,j}) \! \in \! \hat{E}_2 
     \end{array} \right\} \cup \! \{0\} \right) &
     \begin{array}{l}
       \mbox{if $k=0$ and} \\
       \hat{\lf}_1(\hat{v}_{1, i}) \cap \hat{\lf}_2(\hat{v}_{2, j}) \neq \emptyset;
     \end{array} \\

     \max\!\left( \!
    \begin{array}{l}
      \{ \hat{D}_{x,j,k} \mid {(\hat{v}_{1,x},\hat{v}_{1,i})\! \in \! \hat{E}_1}\}\cup{} \\
      \{ \hat{D}_{i,y,k} \! \mid \! {(\hat{v}_{2,y},\hat{v}_{2,j})\! \in \! \hat{E}_2}\} 
      \cup \{0\} \\
    \end{array} \! \right) &
    \begin{array}{l}
      \mbox{if $k=0$ and} \\
      \hat{\lf}_1(\hat{v}_{1, i}) \cap \hat{\lf}_2(\hat{v}_{2, j}) = \emptyset;
    \end{array}\\
    
    \delta \! + \! \max\!\left( \!
    \left\{ \hat{D}_{x,y,z} \! ~\middle|~ \!
    \begin{array}{l}
      (\hat{v}_{1,x},\hat{v}_{1,i}) \! \in \! \hat{E}_1, \\
      (\hat{v}_{2,y},\hat{v}_{2,j}) \! \in \! \hat{E}_2, \\
      (v_{3,z},v_{3,k}) \! \in \! E_3  \\
      \mbox{or } z = 0
    \end{array} \right\} \cup \{\gamma\} \! \right) &
    \begin{array}{l}
      \mbox{if $k > 0$ and} \\
      \hat{\lf}_1(\hat{v}_{1, i}) \cap \hat{\lf}_2(\hat{v}_{2, j}) \cap \{\lf_3([v_{3, k}])\} \\
      \neq \emptyset;
    \end{array} \\

    \max\!\left( \!
    \left\{ \! \delta+\hat{D}_{x,y,k} \! ~\middle|~ \!
    \begin{array}{l}
      (\hat{v}_{1,x},\hat{v}_{1,i}) \! \in \! \hat{E}_1, \\
      (\hat{v}_{2,y},\hat{v}_{2,j}) \! \in \! \hat{E}_2
    \end{array} \! \right\}
    \cup \{-\infty\} \! \right) &
    \begin{array}{l}
      \mbox{if $k > 0$,}\\
      \hat{\lf}_1(\hat{v}_{1, i}) \cap \hat{\lf}_2(\hat{v}_{2, j}) \cap \{\lf_3(v_{3, k})\} \\
      = \emptyset, \mbox{ and } 
      \hat{\lf}_1(\hat{v}_{1, i}) \! \cap \! \hat{\lf}_2(\hat{v}_{2, j}) \! \neq \! \emptyset;
    \end{array} \\

    \max\!\left( \!
    \begin{array}{l}
      \{ \hat{D}_{x,j,k} \mid {(\hat{v}_{1,x},\hat{v}_{1,i})\! \in \! \hat{E}_1}\} \cup{} \\
      \{ \hat{D}_{i,y,k} \! \mid \! {(\hat{v}_{2,y},\hat{v}_{2,j})\! \in \! \hat{E}_2}\} \cup \{-\infty\}
    \end{array} \! \right)
    & \mbox{otherwise},
       \end{cases}
  \nonumber
  \label{eqn:SEQ-I-L-nl-c-1}
\end{eqnarray}
%
where
\begin{eqnarray*}
\delta & = &
\begin{cases}
  \infty &
    \mbox{if both $\hat{\lf}_1(\hat{v}_{1, i})$ and $\hat{\lf}_2(\hat{v}_{2, j})$ are cyclic vertices;} \\
  1 & \mbox{otherwise},
\end{cases} \\
\gamma & = &
\begin{cases}
  0 &
  \begin{array}{l}
    \mbox{if $\hat{v}_{1, i}$ does not have in-coming edges at all or $\hat{v}_{2, j}$ does not have} \\
    \mbox{in-coming edges at all, and $v_{3, k}$ does not have in-coming edges};
  \end{array}  \\
  -\infty & \mbox{otherwise}. 
\end{cases}
\end{eqnarray*}
In the above recurrence, we use a convention that $\infty + (-\infty) = -\infty$.


We perform preprocessing which transforms $G_1$ and $G_2$ into $\hat{G}_1$ and $\hat{G}_2$ in
$O(|E_1|+|E_2|)$ time with $O(|V_1|+|V_2|)$ space, based on strongly connected components.

To examine the conditions in the above recurrence,
we explicitly construct the intersection of the character labels of the given vertices
$\hat{v}_{1,i} \in \hat{V_1}$, $\hat{v}_{2,j} \in \hat{V_2}$,
and $\hat{v}_{3,k} \in V_3$ by using balanced trees, as follows:
\begin{itemize}
\item Checking whether $\hat{\lf}_1(\hat{v}_{1,i}) \cap \hat{\lf}_2(\hat{v}_{2,j}) = \emptyset$ or $\neq \emptyset$:
Let $\Sigma_1$ and $\Sigma_2$ be the sets of characters that appear in $G_1$ and $G_2$,
respectively.
For every node $\hat{v}_{1,i} \in \hat{V}_1$ of the transformed graph $\hat{G}_1$,
we build a balanced tree $\mathcal{T}_{i}$ which consists of the characters in $\hat{\lf}_1(\hat{v}_i)$.
Since the total number of characters in the original graph $G_1 = (V_1, E_1)$ is equal to $|V_1|$,
we can build the balanced trees $\mathcal{T}_{i}$ for all $i$ in a total of $O(|V_1|\log{|\Sigma_1|})$ time and $O(|V_1|)$ space.
Then, for each fixed $\hat{\lf}_1(\hat{v}_{1,i}) \in \hat{V}_1$,
by using its balanced tree,
the intersection $\hat{\lf}_1(\hat{v}_{1,i}) \cap \hat{\lf}_2(\hat{v}_{2,j})$ can be computed in $O(|V_2|\log|\Sigma_1|)$ time for all $\hat{\lf}_2(\hat{v}_{2,j}) \in V_2$.
Therefore, $\hat{\lf}_1(\hat{v}_{1,i}) \cap \hat{\lf}_2(\hat{v}_{2,j})$ for all $1 \leq i \leq |\hat{V}_1|$ and $1 \leq j \leq |\hat{V}_2|$ can be computed in $O(|V_1||V_2|\log|\Sigma_1|)$ total time.

\item Checking whether $\hat{\lf}_1(\hat{v}_{1,i}) \cap \hat{\lf}_2(\hat{v}_{2,j}) \cap \lf_3(v_{3,k}) = \emptyset$ or $\neq \emptyset$:
While computing $\Sigma_{i,j} = \hat{\lf}_1(\hat{v}_{1,i}) \cap \hat{\lf}_2(\hat{v}_{2,j})$ in the above,
we also build another balanced tree $\mathcal{T}_{i,j}$ which consists of the characters in $\Sigma_{i,j}$ for every $1 \leq i \leq |\hat{V}_1|$ and $1 \leq j \leq |\hat{V}_2|$.
This can be done in $O(|V_1||V_2|\log|\Sigma_1|)$ total time and $O(|V_1||V_2|)$ space.
Then, for each fixed $1 \leq i \leq |\hat{V}_1|$ and $1 \leq j \leq |\hat{V}_2|$,
$\hat{\lf}_1(\hat{v}_{1,i}) \cap \hat{\lf}_2(\hat{v}_{2,j}) \cap \lf_3(v_{3,k})$  can be computed in a total of $O(|V_3|\log|\Sigma_{i,j}|)$ time.
Therefore, $\hat{\lf}_1(\hat{v}_{1,i}) \cap \hat{\lf}_2(\hat{v}_{2,j}) \cap \lf_3(v_{3,k})$ for all $1 \leq i \leq |\hat{V}_1|$, $1 \leq j \leq |\hat{V}_2|$ and,  $1 \leq k \leq |V_3|$ can be computed in $O(|V_1||V_2||V_3|\log|\Sigma|)$ time.
\end{itemize}

Assuming that the above preprocessing for the conditions in the recurrence are all done,
we can compute $\hat{D}_{i,j,k}$ for all $1 \leq i \leq |\hat{V}_1|$,
$1 \leq j \leq |\hat{V}_2|$ and $1 \leq k \leq |V_3|$ using dynamic programming
of size $O(|\hat{V}_1||\hat{V}_2||V_3|)$ in $O(|\hat{E}_1||\hat{E}_2||E_3|)$ time,
in a similar way to the acyclic case for Theorem~\ref{theo:SEQ-I-L-nl}.

Overall, the total time complexity is 
$O(|E_1|+|E_2|+|E_3| + |\hat{V}_1||\hat{V}_2|\log|\Sigma_1| + |\hat{V}_1||\hat{V}_2||V_3|\log|\Sigma| + |\hat{E}_1||\hat{E}_2||E_3|) \subseteq O(|E_1||E_2||E_3|+|V_1||V_2||V_3|\log|\Sigma|)$.

The total space complexity is $O(|V_1||V_2|+|\hat{V}_1||\hat{V}_2||V_3|) \subseteq O(|V_1||V_2||V_3|)$.
\qed
\end{proof}

An example of computing $\hat{D}_{i,j,k}$ using dynamic programming is shown in Figure~\ref{fig:SEQ-I-L-c}.

Algorithm~\ref{algo:SEQ-I-L-nl-c} in Appendix~\ref{sec:SEQ-IC-LCS_cyclic_pseudocode} shows 
a pseudo-code of our algorithm which solves Problem~\ref{prob:SEQ-I-L-nl} 
in the case where $G_1$ and $G_2$ can be cyclic and $G_3$ is acyclic.

\section{Conclusions and Open Questions}\label{chap:conclusion}

In this paper, we introduced the new problem of computing the SEQ-IC-LCS on labeled graphs.
We showed that when the all the input labeled graphs are acyclic, the problem can be solved 
in $O(|E_1||E_2|||E_3)$ time and $O(|V_1||V_2||V_3|)$ space 
by a dynamic programming approach. 
Furthermore, we extend our algorithm to a more general case where
the two target labeled graphs can contain cycles, and presented
an efficient algorithm that runs in 
$O(|E_1||E_2||E_3| + |V_1||V_2||V_3| \log{|\Sigma|})$ time and $O(|V_1||V_2||V_3|)$ space.

Interesting open questions are whether one can extend the framework of our methods
to the other variants STR-IC/EC-LCS and SEQ-EC-LCS of the constrained LCS problems
in the case of labeled graph inputs.
We believe that SEQ-EC-LCS for labeled graphs can be solved by similar methods
to our SEQ-IC-LCS methods, within the same bounds.

\section*{Acknowledgments}
This work was supported by JSPS KAKENHI Grant Numbers
JP21K17705~(YN), JP23H04386~(YN), JP22H03551~(SI), and JP23K18466~(SI).

\bibliographystyle{abbrv}
\bibliography{ref}

\begin{thebibliography}{10}

\bibitem{AbboudBW15}
A.~Abboud, A.~Backurs, and V.~V. Williams.
\newblock Tight hardness results for {LCS} and other sequence similarity
  measures.
\newblock In {\em {FOCS} 2015}, pages 59--78, 2015.

\bibitem{amir}
A.~Amir, M.~Lewenstein, and N.~Lewenstein.
\newblock Pattern matching in hypertext.
\newblock In {\em WADS 1997}, volume 1272 of {\em LNCS}, pages 160--173, 1997.

\bibitem{AnglesABHRV17}
R.~Angles, M.~Arenas, P.~Barcel{\'{o}}, A.~Hogan, J.~L. Reutter, and D.~Vrgoc.
\newblock Foundations of modern query languages for graph databases.
\newblock {\em {ACM} Comput. Surv.}, 50(5):68:1--68:40, 2017.

\bibitem{AoyamaNIIBT18}
K.~Aoyama, Y.~Nakashima, T.~I, S.~Inenaga, H.~Bannai, and M.~Takeda.
\newblock Faster online elastic degenerate string matching.
\newblock In {\em {CPM} 2018}, volume 105 of {\em LIPIcs}, pages 9:1--9:10,
  2018.

\bibitem{ARSLAN}
A.~N. Arslan.
\newblock Regular expression constrained sequence alignment.
\newblock {\em Journal of Discrete Algorithms}, 5(4):647--661, 2007.
\newblock Selected papers from Combinatorial Pattern Matching 2005.

\bibitem{ArslanE05}
A.~N. Arslan and {\"{O}}.~Egecioglu.
\newblock Algorithms for the constrained longest common subsequence problems.
\newblock {\em Int. J. Found. Comput. Sci.}, 16(6):1099--1109, 2005.

\bibitem{BernardiniGPPR22}
G.~Bernardini, P.~Gawrychowski, N.~Pisanti, S.~P. Pissis, and G.~Rosone.
\newblock Elastic-degenerate string matching via fast matrix multiplication.
\newblock {\em {SIAM} J. Comput.}, 51(3):549--576, 2022.

\bibitem{BernardiniPPR20}
G.~Bernardini, N.~Pisanti, S.~P. Pissis, and G.~Rosone.
\newblock Approximate pattern matching on elastic-degenerate text.
\newblock {\em Theor. Comput. Sci.}, 812:109--122, 2020.

\bibitem{BringmannK15}
K.~Bringmann and M.~K{\"{u}}nnemann.
\newblock Quadratic conditional lower bounds for string problems and dynamic
  time warping.
\newblock In {\em {FOCS} 2015}, pages 79--97, 2015.

\bibitem{Caceres23}
M.~C{\'{a}}ceres.
\newblock Parameterized algorithms for string matching to {DAGs}: Funnels and
  beyond.
\newblock In {\em {CPM} 2023}, volume 259 of {\em LIPIcs}, pages 7:1--7:19,
  2023.

\bibitem{SEQECLCS_Chen_2011}
Y.-C. Chen and K.-M. Chao.
\newblock On the generalized constrained longest common subsequence problems.
\newblock {\em Journal of Combinatorial Optimization}, 21(3):383--392, Apr
  2011.

\bibitem{SEQICLCS_2004}
F.~Y. Chin, A.~D. Santis, A.~L. Ferrara, N.~Ho, and S.~Kim.
\newblock A simple algorithm for the constrained sequence problems.
\newblock {\em Information Processing Letters}, 90(4):175 -- 179, 2004.

\bibitem{hypertext}
J.~Conklin.
\newblock Hypertext: An introduction and survey.
\newblock {\em IEEE Computer}, 20(9):17--41, 1987.

\bibitem{bbw089}
T.~C. P.-G. Consortium.
\newblock {Computational pan-genomics: status, promises and challenges}.
\newblock {\em Briefings in Bioinformatics}, 19(1):118--135, 2016.

\bibitem{STRICLCS_DEOROWICZ_2012}
S.~Deorowicz.
\newblock Quadratic-time algorithm for a string constrained lcs problem.
\newblock {\em Information Processing Letters}, 112(11):423 -- 426, 2012.

\bibitem{EquiGMT19}
M.~Equi, R.~Grossi, V.~M{\"{a}}kinen, and A.~I. Tomescu.
\newblock On the complexity of string matching for graphs.
\newblock In {\em {ICALP} 2019}, volume 132 of {\em LIPIcs}, pages 55:1--55:15,
  2019.

\bibitem{EquiMT21}
M.~Equi, V.~M{\"{a}}kinen, and A.~I. Tomescu.
\newblock Graphs cannot be indexed in polynomial time for sub-quadratic time
  string matching, unless {SETH} fails.
\newblock In {\em {SOFSEM} 2021}, volume 12607 of {\em Lecture Notes in
  Computer Science}, pages 608--622, 2021.

\bibitem{GrossiILPPRRVV17}
R.~Grossi, C.~S. Iliopoulos, C.~Liu, N.~Pisanti, S.~P. Pissis, A.~Retha,
  G.~Rosone, F.~Vayani, and L.~Versari.
\newblock On-line pattern matching on similar texts.
\newblock In {\em {CPM} 2017}, volume~78 of {\em LIPIcs}, pages 9:1--9:14,
  2017.

\bibitem{IliopoulosKP21}
C.~S. Iliopoulos, R.~Kundu, and S.~P. Pissis.
\newblock Efficient pattern matching in elastic-degenerate strings.
\newblock {\em Inf. Comput.}, 279:104616, 2021.

\bibitem{RLCSA}
G.~Kucherov, T.~Pinhas, and M.~Ziv-Ukelson.
\newblock Regular language constrained sequence alignment revisited.
\newblock In {\em IWOCA 2011}, pages 404--415, 2011.

\bibitem{Manber}
U.~Manber and S.~Wu.
\newblock Approximate string matching with arbitrary costs for text and
  hypertext.
\newblock In {\em Proc. IAPR}, pages 22--33, 1992.

\bibitem{navarro}
G.~Navarro.
\newblock Improved approximate pattern matching on hypertext.
\newblock {\em Theoretial Computer Science}, 237(1--2):455--463, 2000.

\bibitem{park}
K.~Park and D.~K. Kim.
\newblock String matching in hypertext.
\newblock In {\em Proc. CPM'95}, pages 318--329, 1995.

\bibitem{PSC2011-17}
K.~Shimohira, S.~Inenaga, H.~Bannai, and M.~Takeda.
\newblock Computing longest common substring/subsequence of non-linear texts.
\newblock In {\em {PSC} 2011}, pages 197--208, 2011.

\bibitem{CLCS_Tsai_2003}
Y.-T. Tsai.
\newblock The constrained longest common subsequence problem.
\newblock {\em Information Processing Letters}, 88(4):173 -- 176, 2003.

\bibitem{Wagner_1974_LCS}
R.~A. Wagner and M.~J. Fischer.
\newblock The string-to-string correction problem.
\newblock {\em J. ACM}, 21(1):168--173, Jan. 1974.

\bibitem{STRECLCS_Wang_2013}
L.~Wang, X.~Wang, Y.~Wu, and D.~Zhu.
\newblock A dynamic programming solution to a generalized {LCS} problem.
\newblock {\em Inf. Process. Lett.}, 113(19-21):723--728, 2013.

\bibitem{YonemotoNIB23}
Y.~Yonemoto, Y.~Nakashima, S.~Inenaga, and H.~Bannai.
\newblock Space-efficient {STR-IC-LCS} computation.
\newblock In {\em {SOFSEM} 2023}, volume 13878 of {\em LNCS}, pages 372--384,
  2023.

\end{thebibliography}

\clearpage

\appendix

\section{Pseudo-code for SEQ-IC-LCS of Acyclic Labeled Graphs}
\label{sec:SEQ-IC-LCS_acyclic_pseudocode}

\begin{algorithm2e}[h!]\label{algo:SEQ-I-L-nl}
\caption{Computie the SEQ-IC-LCS length of acyclic labeled graphs }
\renewcommand{\baselinestretch}{0.9}\selectfont
\KwIn{Acyclic labeled graphs $G_1=(V_1,E_1,\lf_1) , G_2=(V_2,E_2,\lf_2), G_3=(V_3,E_3,\lf_3)$}
\KwOut{Length of a longest string in the set \\ $\{z \mid \exists~q \in \lf_3(\MP(G_3)) \mbox{ such that } q \in \Subseq(z) \mbox{ and } z \in \Subseq(G_1) \cap \Subseq(G_2)\}$}
topological sort\ $G_1$, $G_2$, and $G_3$\;
$V_{\mathrm{end}} \leftarrow$ the vertices of $G_3$ which have no out-going edges\;
$n \leftarrow |V_1|$; $m \leftarrow |V_2|$; $l \leftarrow |V_3|+1$\; 
Let $D$ be an $n \times m \times l$ integer array\;

\For{$i \leftarrow 1$ \KwTo $n$}{
 \For{$j \leftarrow 1$ \KwTo $m$}{
  $D_{i,j,0} \leftarrow D_{i,j}$ \tcp*{Algorithm by Shimohira et al.}
 }
}

\For{$i \leftarrow 1$ \KwTo $n$}{
 \For{$j \leftarrow 1$ \KwTo $m$}{
  \For{$k \leftarrow 1$ \KwTo $l$}{
   \If{$\lf_1(v_{1,i})=\lf_2(v_{2,j})=\lf_3(v_{3,k})$}{
    \lIf{there are no in-coming edges to $v_{3,k}$}{ 
      $D_{i,j,k}\leftarrow 1$
    }
    \lElse{
      $D_{i,j,k}\leftarrow -\infty$
    }
    \ForAll{$v_{1,x}$ s.t. $(v_{1,x},v_{1,i})\in E_1$}{
     \ForAll{$v_{2,y}$ s.t. $(v_{2,y},v_{2,j})\in E_2$}{
      \If{there are no in-coming edges to $v_{3,k}$}{
          \If{$D_{i,j,k}<1+D_{x,y,0}$}{
            $D_{i,j,k}\leftarrow 1+D_{x,y,0}$\;
          }
        }
      \Else{
        \ForAll{$v_{3,z}$ s.t. $(v_{3,z},v_{3,k})\in E_3$}{
          \lIf{$D_{i,j,k}<1+D_{x,y,z}$}{
            $D_{i,j,k}\leftarrow 1+D_{x,y,z}$
          }
        }
      }
     }
    }
   }
   \ElseIf{$\lf_1(v_{1,i})=\lf_2(v_{2,j})\neq\lf_3(v_{3,k})$}{
      $D_{i,j,k}\leftarrow -\infty$\;
     \ForAll{$v_{1,x}$ s.t. $(v_{1,x},v_{1,i})\in E_1$}{
      \ForAll{$v_{2,y}$ s.t. $(v_{2,y},v_{2,j})\in E_2$}{
       \lIf{$D_{i,j,k}<1+D_{x,y,k}$}{
        $D_{i,j,k}\leftarrow 1+D_{x,y,k}$
       }
      }
     }
    }
  \Else{
    $D_{i,j,k}\leftarrow -\infty$\;
    \ForAll{$v_{1,x}$ s.t. $(v_{1,x},v_{1,i})\in E_1$}{
     \lIf{$D_{i,j,k}<D_{x,j,k}$}{
      $D_{i,j,k}\leftarrow D_{x,j,k}$
     }
    }
    \ForAll{$v_{2,y}$ s.t. $(v_{2,y},v_{2,j})\in E_2$}{
     \lIf{$D_{i,j,k}<D_{i,y,k}$}{
      $D_{i,j,k}\leftarrow D_{i,y,k}$ 
     }
    }
   }
  }
 }
}
\Return $\max\{D_{i,j,k} \mid 1 \leq i \leq n, 1 \leq j \leq m, k \in V_{\mathrm{end}}\}$\;
\end{algorithm2e}

\clearpage

\section{Pseudo-code for SEQ-IC-LCS of Cyclic Labeled Graphs}
\label{sec:SEQ-IC-LCS_cyclic_pseudocode}

\begin{algorithm2e}[h!]\label{algo:SEQ-I-L-nl-c} 
\caption{Compute the SEQ-IC-LCS length of cyclic labeled graphs}
\renewcommand{\baselinestretch}{0.9}\selectfont
\KwIn{Labeled graphs $G_1=(V_1,E_1,\lf_1)$, $G_2=(V_2,E_2,\lf_2)$, and acyclic labeled graph $G_3=(V_3,E_3,\lf_3)$}
\KwOut{Length of a longest string in the set \\ $\{z \mid \exists~q \in \lf_3(\MP(G_3)) \mbox{ such that } q \in \Subseq(z) \mbox{ and } z \in \Subseq(G_1) \cap \Subseq(G_2)\}$}

$G'_1 \leftarrow $ Graph obtained from $G_1$ based on its strongly connected components\;
$G'_2 \leftarrow $ Graph obtained from $G_2$ based on its strongly connected components\;
topological sort\ $\hat{G}_1$, $\hat{G}_2$, $G_3$\;
$V_{\mathrm{end}} \leftarrow$ the vertices of $G_3$ which have no out-going edges\;
Let $\hat{D}$ be a $|\hat{V}_1| \times |\hat{V}_2| \times (|V_3|+1)$ integer array\;
Let $M$ be a $|\hat{V}_1| \times |\hat{V}_2|$ integer array\;
\For{$i \leftarrow 1$ \KwTo $|\hat{V}_1|$}{
  \For{$j \leftarrow 1$ \KwTo $|\hat{V}_2|$}{
    \lIf{$\hat{\lf}_1(\hat{v}_{1, i}) \cap \hat{\lf}_2(\hat{v}_{2, j}) \neq \emptyset$}{
      $M_{i,j} \leftarrow 1$
    }
    \lElse{
      $M_{i,j} \leftarrow 0$
    }
  }
}
Let $S$ be a $|\hat{V}_1| \times |\hat{V}_2|$ array to store pointers to balanced trees\;
Let $B$ be a $|\hat{V}_1|$ array to store pointers to balanced trees\;

\For{$i \leftarrow 1$ \KwTo $|\hat{V}_1|$}{
  $B_i \leftarrow$ balanced tree of characters in $\hat{\lf}_1(\hat{v}_{1,i})$\;
}
\For{$i \leftarrow 1$ \KwTo $|\hat{V}_1|$}{
  \For{$j \leftarrow 1$ \KwTo $|\hat{V}_2|$}{
    \If{some characters in $\hat{\lf}_2(\hat{v}_{2,j})$ are in $B_i$}{ 
      $M_{i,j} \leftarrow 1$\;
      $S_{i,j} \leftarrow$ balanced trees of all characters in $B_i \cap \hat{\lf}_2(\hat{v}_{2,j})$\;
    }
    \lElse{
      $M_{i,j} \leftarrow 0$
    }
  }
}

Compute $\hat{D}_{i,j,0}$; \tcp*{Algorithm~\ref{algo:k=0}}

 \For{$i \leftarrow 1$ \KwTo $|\hat{V}_1|$}{
  \For{$j \leftarrow 1$ \KwTo $|\hat{V}_2|$}{
    \For{$k \leftarrow 1$ \KwTo $|V_3|$}{
      \If{$(\hat{v}_{1,i},\hat{v}_{1,i}) \in \hat{E}_1$ and $(\hat{v}_{2,j},\hat{v}_{2,j}) \in \hat{E}_2$}{
        Compute $\hat{D}_{i,j,k}$ if $(\hat{v}_{1,i},\hat{v}_{1,i}) \in \hat{E}_1$ and $(\hat{v}_{2,j},\hat{v}_{2,j}) \in \hat{E}_2$ \tcp*{Algorithm~\ref{algo:cycles}}
      }
      \Else{
        Compute $\hat{D}_{i,j,k}$ if $(\hat{v}_{1,i},\hat{v}_{1,i}) \notin \hat{E}_1$ or $(\hat{v}_{2,j},\hat{v}_{2,j}) \notin \hat{E}_2$ \tcp*{Algorithm~\ref{algo:not-cycles}}
      }
    }
  }
 }

 \Return $\max\{\hat{D}_{i,j,k} \mid 1 \leq i \leq |\hat{V}_1|, 1 \leq j \leq |\hat{V}_2|, k \in V_{\mathrm{end}}\}$\;
\end{algorithm2e}

\begin{algorithm2e}[th]\label{algo:k=0}
\caption{Compute $\hat{D}_{i,j,0}$}
\renewcommand{\baselinestretch}{0.9}\selectfont
\For{$i \leftarrow 1$ \KwTo $|\hat{V}_1|$}{
 \For{$j \leftarrow 1$ \KwTo $|\hat{V}_2|$}{
  \If{$(\hat{v}_{1,i},\hat{v}_{1,i}) \in \hat{E}_1$ and $(\hat{v}_{2,j},\hat{v}_{2,j}) \in \hat{E}_2$}{
    \If{$M_{i,j} = 1$}{
      $\hat{D}_{i,j,k}\leftarrow \infty$\;
    }
    \Else{
      $\hat{D}_{i,j,0}\leftarrow 0$\;
      \ForAll{$\hat{v}_{1,x}$ such that $(\hat{v}_{1,x},\hat{v}_{1,i})\in \hat{E}_1$}{
       \If{$\hat{D}_{i,j,0}<\hat{D}_{x,j,0}$}{
         $\hat{D}_{i,j,0}\leftarrow \hat{D}_{x,j,0}$\;
       }
      }
      \ForAll{$\hat{v}_{2,y}$ such that $(\hat{v}_{2,y},\hat{v}_{2,j})\in \hat{E}_2$}{
       \If{$\hat{D}_{i,j,0}<\hat{D}_{i,y,0}$}{
        $\hat{D}_{i,j,0}\leftarrow \hat{D}_{i,y,0}$\; 
       }
      }
    }
  }
  \Else{
    \If{$M_{i,j} = 1$}{
      $\hat{D}_{i,j,k}\leftarrow 1$\;
      \ForAll{$\hat{v}_{1,x}$ such that $(\hat{v}_{1,x},\hat{v}_{1,i})\in \hat{E}_1$}{
        \ForAll{$\hat{v}_{2,y}$ such that $(\hat{v}_{2,y},\hat{v}_{2,j})\in \hat{E}_2$}{
          \If{$\hat{D}_{i,j,0}<1+\hat{D}_{x,y,0}$}{
            $\hat{D}_{i,j,0}\leftarrow 1+\hat{D}_{x,y,0}$\;
          }
        }
      }
    }
    \Else{
      $\hat{D}_{i,j,0}\leftarrow 0$\;
      \ForAll{$\hat{v}_{1,x}$ such that $(\hat{v}_{1,x},\hat{v}_{1,i})\in \hat{E}_1$}{
       \If{$\hat{D}_{i,j,0}<\hat{D}_{x,j,0}$}{
         $\hat{D}_{i,j,0}\leftarrow \hat{D}_{x,j,0}$\;
       }
      }
      \ForAll{$\hat{v}_{2,y}$ such that $(\hat{v}_{2,y},\hat{v}_{2,j})\in \hat{E}_2$}{
       \If{$\hat{D}_{i,j,0}<\hat{D}_{i,y,0}$}{
        $\hat{D}_{i,j,0}\leftarrow \hat{D}_{i,y,0}$\; 
       }
      }
    }
  }
 }
}
\end{algorithm2e}

\begin{algorithm2e}[th]\label{algo:cycles}
  \caption{Compute $\hat{D}_{i,j,k}$ when $(\hat{v}_{1,i},\hat{v}_{1,i}) \in \hat{E}_1$ and $(\hat{v}_{2,j},\hat{v}_{2,j}) \in \hat{E}_2$}
  \renewcommand{\baselinestretch}{0.9}\selectfont
\If{$\hat{\lf}_1(\hat{v}_{1, i}) \cap \hat{\lf}_2(\hat{v}_{2, j}) \cap \{\lf_3(v_{3, k})\} \neq \emptyset$}{ 
            $\hat{D}_{i,j,k}\leftarrow -\infty$\;
            \ForAll{$\hat{v}_{1,x}$ such that $(\hat{v}_{1,x},\hat{v}_{1,i})\in \hat{E}_1$}{
            \ForAll{$\hat{v}_{2,y}$ such that $(\hat{v}_{2,y},\hat{v}_{2,j})\in \hat{E}_2$}{
              \If{there are no in-coming edges to $v_{3,k}$}{
                \If{$\hat{D}_{x,y,0} \neq -\infty$}{
                  $\hat{D}_{i,j,k}\leftarrow \infty$\;
                }
              }
             \Else{
               \ForAll{$\hat{v}_{3,z}$ such that $(\hat{v}_{3,z},\hat{v}_{3,k})\in \hat{E}_3$}{
                \If{$\hat{D}_{x,y,z} \neq -\infty$}{
                  $\hat{D}_{i,j,k}\leftarrow \infty$\;
                 }
               }
             }
            }
           }
          }
\ElseIf{$M_{i,j}=1$}{
          $\hat{D}_{i,j,k}\leftarrow -\infty$\;
          \ForAll{$\hat{v}_{1,x}$ such that $(\hat{v}_{1,x},\hat{v}_{1,i})\in \hat{E}_1$}{
           \ForAll{$\hat{v}_{2,y}$ such that $(\hat{v}_{2,y},\hat{v}_{2,j})\in \hat{E}_2$}{
            \If{$\hat{D}_{x,y,k} \neq -\infty$}{
             $\hat{D}_{i,j,k}\leftarrow \infty$\;
            }
           }
          }
      }
\Else{
         $\hat{D}_{i,j,k}\leftarrow -\infty$\;
         \ForAll{$v_{1,x}$ such that $(v_{1,x},v_{1,i})\in E_1$}{
          \If{$\hat{D}_{i,j,k}<\hat{D}_{x,j,k}$}{
           $\hat{D}_{i,j,k}\leftarrow \hat{D}_{x,j,k}$\;
          }
         }
         \ForAll{$v_{2,y}$ such that $(v_{2,y},v_{2,j})\in E_2$}{
          \If{$\hat{D}_{i,j,k}<\hat{D}_{i,y,k}$}{
           $\hat{D}_{i,j,k}\leftarrow \hat{D}_{i,y,k}$\; 
          }
         }
      }
\end{algorithm2e}

\begin{algorithm2e}[th]\label{algo:not-cycles}
  \caption{Compute $\hat{D}_{i,j,k}$ when $(\hat{v}_{1,i},\hat{v}_{1,i}) \notin \hat{E}_1$ or $(\hat{v}_{2,j},\hat{v}_{2,j}) \notin \hat{E}_2$}
  \renewcommand{\baselinestretch}{0.9}\selectfont
  \If{$\hat{\lf}_1(\hat{v}_{1, i}) \cap \hat{\lf}_2(\hat{v}_{2, j}) \cap \{\lf_3(v_{3, k})\} \neq \emptyset$}{ 
    \If{there are no in-coming edges to $v_{3,k}$}{ 
      $\hat{D}_{i,j,k}\leftarrow 1$\;
    }
    \Else{
      $\hat{D}_{i,j,k}\leftarrow -\infty$\;
    }
    \ForAll{$\hat{v}_{1,x}$ such that $(\hat{v}_{1,x},v_{1,i})\in \hat{E}_1$}{
     \ForAll{$\hat{v}_{2,y}$ such that $(\hat{v}_{2,y},\hat{v}_{2,j})\in \hat{E}_2$}{
      \If{there are no in-coming edges to $v_{3,k}$}{
          \If{$\hat{D}_{i,j,k}<1+\hat{D}_{x,y,0}$}{
            $\hat{D}_{i,j,k}\leftarrow 1+\hat{D}_{x,y,0}$\;
          }
       }
      \Else{ 
        \ForAll{$v_{3,z}$ such that $(v_{3,z},v_{3,k})\in E_3$}{
          \If{$\hat{D}_{i,j,k}<1+\hat{D}_{x,y,z}$}{
            $\hat{D}_{i,j,k}\leftarrow 1+\hat{D}_{x,y,z}$\;
          }
        }
      }
    }
   }
  }
   \ElseIf{$M_{i,j} = 1$}{
      $\hat{D}_{i,j,k}\leftarrow -\infty$\;
     \ForAll{$\hat{v}_{1,x}$ such that $(\hat{v}_{1,x},\hat{v}_{1,i})\in \hat{E}_1$}{
      \ForAll{$\hat{v}_{2,y}$ such that $(\hat{v}_{2,y},\hat{v}_{2,j})\in \hat{E}_2$}{
       \If{$\hat{D}_{i,j,k}<1+\hat{D}_{x,y,k}$}{
        $\hat{D}_{i,j,k}\leftarrow 1+\hat{D}_{x,y,k}$\;
       }
      }
     }
    }
  \Else{
    $\hat{D}_{i,j,k}\leftarrow -\infty$\;
    \ForAll{$\hat{v}_{1,x}$ such that $(\hat{v}_{1,x},\hat{v}_{1,i})\in \hat{E}_1$}{
     \If{$\hat{D}_{i,j,k}<\hat{D}_{x,j,k}$}{
      $\hat{D}_{i,j,k}\leftarrow \hat{D}_{x,j,k}$\;
     }
    }
    \ForAll{$\hat{v}_{2,y}$ such that $(\hat{v}_{2,y},\hat{v}_{2,j})\in \hat{E}_2$}{
     \If{$\hat{D}_{i,j,k}<\hat{D}_{i,y,k}$}{
      $\hat{D}_{i,j,k}\leftarrow \hat{D}_{i,y,k}$\; 
     }
    }
   }
\end{algorithm2e}

\end{document}